\newtheorem{theorem}{Theorem}
\newtheorem{lemma}[theorem]{Lemma}
\newtheorem{remark}{Remark}
\newtheorem{problem}[theorem]{Problem} 
\newtheorem{proposition}[theorem]{Proposition}  
\journal{Journal of COMPUT METHOD APPL M}
\begin{document}

\begin{frontmatter}

\title{An Energy Stable One-Field Fictitious Domain Method for Fluid-Structure Interactions} 


\author[]{Yongxing Wang\corref{mycorrespondingauthor}}
\cortext[mycorrespondingauthor]{Corresponding author}
\ead{jungsirwang@gmail.com/scywa@leeds.ac.uk}

\author[]{Peter K. Jimack}
\author[]{Mark A. Walkley}

\address{School of Computing, University of Leeds, Leeds, UK, LS2 9JT}

\begin{abstract}
In this article, the energy stability of a one-field fictitious domain method is proved and validated by numerical tests in two and three dimensions. The distinguishing feature of this method is that it only solves for one velocity field for the whole fluid-structure domain; the interactions remain decoupled until solving the final linear algebraic equations. To achieve this the finite element procedures are carried out separately on two different meshes for the fluid and solid respectively, and the assembly of the final linear system brings the fluid and solid parts together via an isoparametric interpolation matrix between the two meshes. The weak formulations are introduced in the continuous case and after discretization in time. Then the stability is analyzed through an energy estimate. Finally, numerical examples are presented to validate the energy stability properties.
\end{abstract}

\begin{keyword}
Fluid-Structure Interactions \sep Fictitious Domain Method \sep One-Field Fictitious Domain Method \sep Energy Stable Scheme.
\end{keyword}

\end{frontmatter}

\linenumbers

\section{Introduction}
\label{sec:introduction}
Three major questions arise when considering a finite element method for the problem of Fluid-Structure Interactions (FSI): (1) what kind of meshes are used (interface fitted or unfitted); (2) how to couple the fluid-structure interactions (monolithic/fully-coupled or partitioned/segregated); (3) what variables are solved (velocity and/or displacement). Combinations of the answers of these questions lead to different types of numerical method. For example, \cite{Degroote_2009, K_ttler_2008}  solve for fluid velocity and solid displacement sequentially (partitioned/segregated) using an Arbitrary Lagrangian-Eulerian (ALE) fitted mesh, whereas \cite{Heil_2004, Heil_2008, Muddle_2012} use an ALE fitted mesh to solve for fluid velocity and solid displacement simultaneously (monolithic/fully-coupled) with a Lagrange Multiplier to enforce the continuity of velocity/displacement on the interface. The Immersed Finite Element Method (IFEM) \cite{Boffi_2015, peskin2002immersed, Wang_2011, Wang_2009,Wang_2013,Zhang_2007,zhang2004immersed} and the Fictitious Domain Method (FDM) \cite{baaijens2001fictitious,Boffi_2016,Glowinski_2001,Hesch_2014,Kadapa_2016,Yu_2005} use two meshes to represent the fluid and solid separately. Although IFEM could be monolithic \cite{Boffi_2015}, the classical IFEM only solves for velocity, while the solid information is arranged on the right-hand side of the fluid equation as a prescribed force term. Although the FDM may be partitioned \cite{Yu_2005}, usually the FDM approach solves for both velocity in the whole domain (fluid plus solid) and displacement of the solid simultaneously via a distributed Lagrange multiplier (DLM) to enforce the consistency of velocity/displacement in the overlapped solid domain. In the case of one-field and monolithic numerical methods for FSI problems, \cite{Auricchio_2014} introduces a 1D model using a one-field FDM formulation based on two meshes, and \cite{Hecht_2017,Pironneau_2016} introduces an energy stable monolithic method (in 2D) based on one Eulerian mesh and discrete remeshing. 

In a previous study \cite{Wang_2017}, we present a one-field monolithic fictitious domain method (subsequently referred to as the one-field FDM) which has the following main features: (1) only one velocity field is solved in the whole domain, based upon the use of an appropriate $L^2$ projection; (2) the fluid and solid equations are solved monolithically. Our motivation for proposing the one-field FDM is based on comparing its features with those of existing numerical schemes. Compared with IFEM the classical IFEM does not solve the solid equation \cite{Wang_2011, Wang_2009,Wang_2013,Zhang_2007,zhang2004immersed}. Instead, the solid information is arranged on the right-hand side of the fluid equation as a prescribed force. The one-field FDM solves the solid equation together with the fluid equation in one discretized linear algebraic system. The similarity is that both methods only solve for velocity and pressure fields (no solid displacement). DLM/FDM methods \cite{baaijens2001fictitious,Boffi_2016,Glowinski_2001,Hesch_2014,Kadapa_2016,Yu_2005} solve the solid equation, but for a displacement field, and couple this displacement with the velocity of the fictitious fluid via a Lagrange multiplier. This leads to a large discretized linear algebra system. The one-field FDM rewrites the solid equation in terms of a velocity variable and couples the fictitious fluid through a finite element interpolation. Monolithic Eulerian methods \cite{Hecht_2017, Pironneau_2016} also express the solid equation in terms of velocity, and the fluid and solid are coupled naturally on an interface-fitted mesh. The one-field FDM uses two meshes to represent the fluid and solid respectively. Consequently, before discretization in space, these two methods have many similarities, the advantage of the one-field FDM being that interface fitting is not required.

The main developments in this paper, following from \cite{Wang_2017}, are as follows. The energy preserving property in the continuous case is proved. The energy nonincreasing property after time discretization is proved, and the same property is also proved after spatial discretization. The implementation in this paper is based on an ${\bf F}$-scheme, i.e., the solid deformation tensor ${\bf F}$ is updated (see section \ref{sec_weak_form_time_discretization}), while the previous paper uses a $\bm{\sigma}$-scheme (see equation (29) in \cite{Wang_2017}). The advantage of this ${\bf F}$-scheme is that the integral is expressed in the reference domain so that it becomes linear for the neo-Hookean solid model (see equation (\ref{weak_form1_discretization_backward_Euler})), which however is nonlinear if expressed in the current domain for the $\bm{\sigma}$-scheme (see equation (29) in \cite{Wang_2017}). The methodology and analysis is demonstrated to extend to the three-dimensional case.

The paper is organized as follows. Control equations and weak formulation are introduced in section \ref{control_equations} and \ref{sec:wfotcl} respectively. The time discretized weak form is then presented in section \ref{sec_weak_form_time_discretization}. Stability of the proposed scheme is analyzed in section \ref{stability_analysis}. Space discretization is discussed in section \ref{sec:weak_form_space_discretization}. Numerical examples are given in section \ref{sec:numerical_exs}, and conclusions are presented in section \ref{sec:conclusions}.

\section{Control equations}
\label{control_equations}
In the following context, $\Omega_t^f\subset\mathbb{R}^d$ and $\Omega_t^s\subset\mathbb{R}^d$ with $d=2,3$ denote the fluid and solid domain respectively which are time dependent regions as shown in Figure \ref{fig1:Schematic diagram of FSI}. $\Omega=\Omega_t^f \cup \Omega_t^s $ is a fixed domain (with outer boundary $\Gamma$) and $\Gamma_t=\partial\Omega_t^f\cap\partial\Omega_t^s$ is the moving interface between fluid and solid. We denote by ${\bf X}$ the reference (material) coordinates of the solid, by ${\bf x}={\bf x}(\cdot,t)$ the current coordinates of the solid, and by ${\bf x}_0$ the initial coordinates of the solid.

\begin{figure}[h!]
	\centering
	\includegraphics[width=2.5in,angle=0]{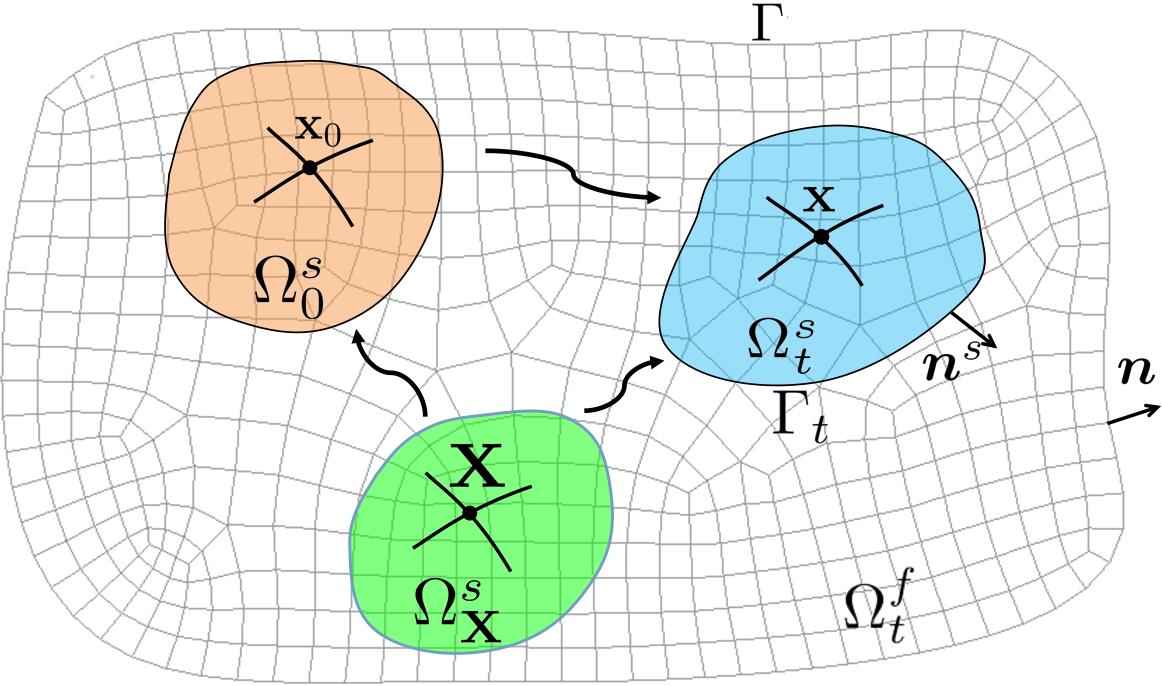}
	\caption {\scriptsize Schematic diagram of FSI, $\Omega=\Omega_t^f\cup \Omega_t^s$.} 
	\label{fig1:Schematic diagram of FSI}
\end{figure}

Let $\rho, \mu, {\bf u}, p, {\bm{\sigma}}$ denote the density, viscosity, velocity, pressure and stress tensor respectively. We assume both an incompressible fluid and incompressible solid, then the conservation of momentum and conservation of mass take the same form as follows:

Momentum equation:
\begin{equation} \label{momentum_equation}
\rho\frac{d{\bf u}}{dt}
=\nabla \cdot {\bm\sigma},
\end{equation}
Continuity equation:
\begin{equation} \label{continuity_equation}
\nabla \cdot {\bf u}=0.
\end{equation}

An incompressible Newtonian constitutive equation in $\Omega^f$ can be expressed as:
\begin{equation} \label{constitutive_fluid}
{\bm\sigma}={\bm\sigma}^f=\mu^f {\rm D}{\bf u}^f-p^f{\bf I},
\end{equation}
where  ${\rm D}{\bf u}=\nabla {\bf u}+\nabla^{\scriptsize T} {\bf u}$. We shall use an incompressible neo-Hookean solid in $\Omega_t^s$ \cite{Boffi_2016, Hecht_2017}, and in common with previous work \cite{ Wang_2013, zhang2004immersed} we also assume the solid has the same viscosity as the fluid. The constitutive equation may be expressed as:
\begin{equation} \label{constitutive_solid}
{\bm\sigma}={\bm\sigma}^s=c_1J^{-1}\left({\bf F}{\bf F}^T-{\bf I}\right)+\mu^f{\rm D}{\bf u}^s-p^s{\bf I},
\end{equation}
where ${\bf F}=\frac{\partial {\bf x}}{\partial {\bf X}}=\frac{\partial {\bf x}}{\partial {\bf x}_0} \frac{\partial {\bf x}_0}{\partial {\bf X}}$=$\nabla_0{\bf x}\nabla_{\bf X}{\bf x}_0$ is the deformation tensor of the solid, and $J=det{\bf F}$ is the determinant of ${\bf F}$. Finally the system is complemented with the following boundary and initial conditions.
\begin{equation}\label{interfaceBC1}
{\bf u}^f={\bf u}^s\quad on \quad  \Gamma_t,
\end{equation}
\begin{equation}\label{interfaceBC2}
{\bf n}^s{\bm \sigma}^f= {\bf n}^s{\bm \sigma}^s\quad on \quad  \Gamma_t,
\end{equation}
\begin{equation}\label{homogeneous_boundary}
{\bf u}^f={\bf 0}\quad on \quad  \Gamma,
\end{equation}
\begin{equation} \label{initialcd_fluid}
\left. {\bf u}^f\right|_{t=0}={\bf u}_0^f,
\end{equation}
\begin{equation} \label{initialcd_solid}
\left. {\bf u}^s\right|_{t=0}={\bf u}_0^s.
\end{equation}
Other boundary conditions are possible on $\Gamma$ but (\ref{homogeneous_boundary}) are used here for simplicity.

\begin{remark}
The corresponding energy function for the hyperelastic stress in (\ref{constitutive_solid}) is defined by \cite{Hesch_2014}:
\begin{equation}
\Psi\left({\bf F}\right)=\frac{c_1}{2}\left(tr_{{\bf F}{\bf F}^T}-d\right)-c_1 ln(J).
\end{equation}
\end{remark}
\section{Weak formulation}
\label{sec:wfotcl}
The finite element weak form discussed in this section is almost the same as that in \cite{Wang_2017}, the only difference is that we integrate the solid stress in the reference domain, because we shall update the deformation tensor (${\bf F}$-scheme) rather than the solid stress as done in \cite{Wang_2017} (${\bm{\sigma}}$-scheme).
In the following context, let $L^2(\omega)$ be the square integrable functions in domain $\omega$, endowed with norm $\left\|u\right\|_{0,\omega}^2=\int_\omega \left|u\right|^2$ ($u\in L^2(\omega)$). Let $H^1(\omega)=\left\{u:u, \nabla u\in L^2(\omega)\right\}$ with the norm denoted by $\left\|u\right\|_{1,\omega}^2=\left\|u\right\|_{0,\omega}^2+\left\|\nabla u\right\|_{0,\omega}^2$. We also denote by $H_0^1(\omega)$ the subspace of $H^1(\omega)$ whose functions have zero values on the boundary of $\omega$, and denote by $L_0^2(\omega)$ the subspace of $L^2(\omega)$ whose functions have zero mean value.

Let
$
{p}=\left \{ 
\begin{matrix}
{{p}^f \quad in \quad \Omega_t^f} \\
{{p}^s \quad in \quad \Omega_t^s} \\
\end{matrix}\right.
$.
Given ${\bf v}\in H_0^1(\Omega)^d$, we perform the following symbolic operations:
$$
\int_{\Omega_t^f}{\rm Eq.}(\ref{momentum_equation})\cdot{\bf v}d{\bf x}
+\int_{\Omega_t^s}{\rm Eq.}(\ref{momentum_equation})\cdot{\bf v}d{\bf x}.
$$

Integrating the stress terms by parts, the above operations, using constitutive equation (\ref{constitutive_fluid}) and (\ref{constitutive_solid}) and boundary condition (\ref{interfaceBC2}), gives:
\begin{equation}\label{operation1}
\begin{split}
&\rho^f\int_{\Omega}\frac{d{\bf u}}{dt} \cdot{\bf v}d{\bf x}
+\frac{\mu^f}{2}\int_{\Omega}{\rm D}{\bf u}:{\rm D}{\bf v}d{\bf x}
-\int_{\Omega}p\nabla \cdot {\bf v}d{\bf x} \\
&+\rho^{\delta}\int_{\Omega_t^s}\frac{d{\bf u}}{dt}\cdot{\bf v}d{\bf x}
+c_1\int_{\Omega_t^s}J^{-1}\left({\bf F}{\bf F}^T-{\bf I}\right):\nabla{\bf v}d{\bf x}
=0,
\end{split}
\end{equation}
where $\rho^{\delta}=\rho^s-\rho^f$. Note that the integrals on the interface $\Gamma_t$ are cancelled out using boundary condition (\ref{interfaceBC2}). This is not surprising because they are internal forces for the whole FSI system considered here. 

Transforming the integral of the last two terms of (\ref{operation1}) to the reference coordinate system, combined with the following symbolic operations for $q\in L^2(\Omega)$,
$$
-\int_{\Omega_t^f}{\rm Eq.}(\ref{continuity_equation})qd{\bf x}
-\int_{\Omega_t^s}{\rm Eq.}(\ref{continuity_equation})qd{\bf x},
$$ 
leads to the weak form of the FSI system as follows.

\begin{problem}\label{problem_weak_continuous} 
Given ${\bf u}_0$ and $\Omega_0^s$, find ${\bf u}(t)\in H_0^1(\Omega)^d$, $p(t) \in L_0^2(\Omega)$ and $\Omega_t^s$, such that for $\forall {\bf v}\in H_0^1(\Omega)^d$, $\forall q \in L^2(\Omega)$, the following two equations hold:
\begin{equation}\label{weak_form1}
\begin{split}
&\rho^f\int_{\Omega}\frac{\partial{\bf u}}{\partial t} \cdot{\bf v}d{\bf x}
+\rho^f \int_{\Omega}\left({\bf u}\cdot\nabla\right){\bf u}\cdot{\bf v}d{\bf x}
+\frac{\mu^f}{2}\int_{\Omega}{\rm D}{\bf u}:{\rm D}{\bf v}d{\bf x}
-\int_{\Omega}p\nabla \cdot {\bf v}d{\bf x} \\
&+\rho^{\delta}\int_{\Omega_{\bf X}^s}\frac{\partial{\bf u}}{\partial t}\cdot{\bf v}d{\bf X} 
+c_1\int_{\Omega_{\bf X}^s}{\bf F}:\nabla_{\bf X}{\bf v}d{\bf X}
-c_1\int_{\Omega_t^s}J^{-1}\nabla\cdot{\bf v}d{\bf x}
=0,
\end{split}
\end{equation}
 and
\begin{equation}\label{weak_form2}
-\int_{\Omega} q\nabla \cdot {\bf u}d{\bf x}=0.
\end{equation}		
\end{problem}

\begin{remark}
Because domain $\Omega$ is stationary (the Eulerian description will be used) and $\Omega_t^s$ is transient which will be updated by its own velocity (the updated Lagrangian description), there is a convection term from the total derivative of time in $\Omega$, but there is no convection term in $\Omega_t^s$.
\end{remark}

\begin{remark}
Problem \ref{problem_weak_continuous} is equivalent to the equation (12) in \cite{Wang_2017}.
\end{remark}

\section{Discretization in time}
\label{sec_weak_form_time_discretization}
We may use the backward Euler method to discretize Problem \ref{problem_weak_continuous}, and update coordinates of the solid by ${\bf x}_{n+1}={\bf x}_n+\Delta t{\bf u}_{n+1}$. As a result, ${\bf F}$ is updated by ${\bf F}_{n+1}={\bf F}_n+\Delta t\nabla_{\bf X}{\bf u}_{n+1}$, and so,
\begin{equation}\label{fn1}
\int_{\Omega_{\bf X}^s}{\bf F}_{n+1}:\nabla_{\bf X}{\bf v}
=\int_{\Omega_{\bf X}^s}{\bf F}_n:\nabla_{\bf X}{\bf v}
+\Delta t\int_{\Omega_{\bf X}^s}\nabla_{\bf X}{\bf u}_{n+1}:\nabla_{\bf X}{\bf v}.
\end{equation}

Using equation (\ref{fn1}), the discretized weak form corresponding to Problem \ref{problem_weak_continuous} may be expressed as:
\begin{problem}\label{problem_weak_after_time_discretization}
Given ${\bf u}_n$, $p_n$ and $\Omega_n^s$, find ${\bf u}_{n+1}\in H_0^1(\Omega)^d$, $p_{n+1} \in L_0^2(\Omega)$ and $\Omega_{n+1}^s$, such that for $\forall{\bf v}\in H_0^1\left(\Omega\right)^d$, $\forall q\in L^2(\Omega)$, the following four relations hold:
\begin{equation}\label{weak_form1_discretization_backward_Euler}
\begin{split}
&\rho^f\int_{\Omega}\frac{{\bf u}_{n+1}-{\bf u}_n}{\Delta t} \cdot{\bf v}d{\bf x}
+\rho^f \int_{\Omega}\left({\bf u}_{n+1}\cdot\nabla\right){\bf u}_{n+1}\cdot{\bf v}d{\bf x} \\
&+\frac{\mu^f}{2}\int_{\Omega}{\rm D}{\bf u}_{n+1}:{\rm D}{\bf v}d{\bf x}
-\int_{\Omega}p_{n+1}\nabla \cdot {\bf v}d{\bf x}\\
&+\rho^{\delta}\int_{\Omega_{\bf X}^s}\frac{{\bf u}_{n+1}-{\bf u}_n}{\Delta t} \cdot{\bf v}d{\bf X}
+c_1\Delta t\int_{\Omega_{\bf X}^s}\nabla_{\bf X}{\bf u}_{n+1}:\nabla_{\bf X}{\bf v}d{\bf X}\\
&-c_1\int_{\Omega_{n+1}^s}J_{n+1}^{-1}\nabla\cdot{\bf v}d{\bf x}
=-c_1\int_{\Omega_{\bf X}^s}{\bf F}_n\nabla_{\bf X}{\bf v}d{\bf X},
\end{split}
\end{equation}
\begin{equation}\label{weak_form2_discretization_backward_Euler}
-\int_{\Omega} q\nabla \cdot {\bf u}_{n+1}d{\bf x}=0,
\end{equation}
\begin{equation}\label{update_of_solid_mesh_backward_Euler}
\Omega_{n+1}^s=\left\{{\bf x}:{\bf x}={\bf x}_n+\Delta t{\bf u}_{n+1}, {\bf x}_n\in\Omega_n^s \right\},
\end{equation}
and
\begin{equation}
{\bf F}_{n+1}={\bf F}_n+\Delta t\nabla_{\bf X}{\bf u}_{n+1}.
\end{equation}
\end{problem}

\begin{remark}
We shall use a fixed-point iteration at each time step to construct $\Omega_{n+1}^s$ implicitly.
\end{remark}

\begin{remark}
Problem \ref{problem_weak_after_time_discretization} is similar to equation (30) in \cite{Wang_2017}, however here the discretized weak form is expressed as an implicit scheme, and the solid deformation tensor ${\bf F}$ is updated rather than the solid stress ${\bm\sigma}^s$ in \cite{Wang_2017}.
\end{remark}

\section{Stability by energy estimate}
\label{stability_analysis}
\subsection{Energy conservation in the continuous case}
\label{sec:energy_continuous}
In this section we shall prove that the weak forms (\ref{weak_form1}) and (\ref{weak_form2}), associated with Problem \ref{problem_weak_continuous}, preserve energy.
\label{sec:ecotcl}
\begin{lemma}\label{lemma_potential_energy}
The energy function $\Psi\left({\bf F}\right)$ for the hyperelastic stress satisfies:
\begin{equation}\label{potential_energy_relation}
c_1\int_0^t\int_{\Omega_{\bf X}^s}{\bf F}:{\nabla}_{\bf X}{\bf u}d{\bf X}
-c_1\int_0^t\int_{\Omega_t^s}J^{-1}{\nabla}\cdot{\bf u}d{\bf x}
=\int_{\Omega_{\bf X}^s}\Psi(\bf F)d{\bf X}.
\end{equation}
\end{lemma}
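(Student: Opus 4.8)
The plan is to show that each of the two terms on the left is the time integral of an exact derivative, so that the fundamental theorem of calculus collapses the $t$-integration onto the two pieces of $\Psi$. Throughout I use that $\Omega_{\bf X}^s$ is fixed (independent of $t$), so time integration and reference-domain integration commute freely.

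First I would record the kinematic identity $\partial_t{\bf F} = \nabla_{\bf X}{\bf u}$, which is immediate from ${\bf F}=\partial{\bf x}/\partial{\bf X}$ and $\partial_t{\bf x}={\bf u}$ on interchanging the derivatives. Then the first integrand is an exact derivative in $t$: ${\bf F}:\nabla_{\bf X}{\bf u} = {\bf F}:\partial_t{\bf F} = \tfrac12\partial_t({\bf F}:{\bf F}) = \tfrac12\partial_t\,\mathrm{tr}({\bf F}{\bf F}^T)$. Integrating over $[0,t]$ and over $\Omega_{\bf X}^s$, and using the initial condition ${\bf F}|_{t=0}={\bf I}$ (so that $\mathrm{tr}({\bf F}{\bf F}^T)|_{t=0}=d$), the first term becomes $\tfrac{c_1}{2}\int_{\Omega_{\bf X}^s}\big(\mathrm{tr}({\bf F}{\bf F}^T)-d\big)d{\bf X}$, which is exactly the first part of $\int_{\Omega_{\bf X}^s}\Psi({\bf F})d{\bf X}$.

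Next I would push the second term back to the reference configuration: since $d{\bf x}=J\,d{\bf X}$ the weight $J^{-1}$ cancels and $\int_{\Omega_t^s}J^{-1}\nabla\cdot{\bf u}\,d{\bf x}=\int_{\Omega_{\bf X}^s}(\nabla\cdot{\bf u})\,d{\bf X}$. The remaining ingredient is Euler's expansion (Jacobi) formula $\partial_t J = J\,\mathrm{tr}(\dot{\bf F}{\bf F}^{-1})$ together with the chain rule $\nabla_{\bf X}{\bf u}=(\nabla{\bf u}){\bf F}$, which give $\mathrm{tr}(\dot{\bf F}{\bf F}^{-1})=\mathrm{tr}(\nabla{\bf u})=\nabla\cdot{\bf u}$, hence $\nabla\cdot{\bf u}=\partial_t\ln J$. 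Integrating over $[0,t]$ and using $J|_{t=0}=1$ gives $c_1\int_0^t\int_{\Omega_t^s}J^{-1}\nabla\cdot{\bf u}\,d{\bf x}=c_1\int_{\Omega_{\bf X}^s}\ln J\,d{\bf X}$, i.e. minus the second part of $\int_{\Omega_{\bf X}^s}\Psi({\bf F})d{\bf X}$. Subtracting this from the first contribution yields the claimed identity.

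I expect the only real care needed is the bookkeeping of coordinate frames in the second term — making sure the divergence is taken with respect to the current coordinates ${\bf x}$, so that the chain-rule identity $\nabla_{\bf X}{\bf u}=(\nabla{\bf u}){\bf F}$ and the Jacobi formula line up — and justifying the swap of the $t$-integral with the time-independent $\Omega_{\bf X}^s$-integral. Everything else is just the fundamental theorem of calculus combined with the initial condition ${\bf F}|_{t=0}={\bf I}$.
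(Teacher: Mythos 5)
Your proof is correct and follows essentially the same route as the paper: both arguments rest on the kinematic identity $\partial_t{\bf F}=\nabla_{\bf X}{\bf u}$, the derivative of the trace and of $\ln\det{\bf F}$ (your Jacobi formula is the paper's $\partial(\det{\bf F})/\partial{\bf F}=\det{\bf F}\,{\bf F}^{-T}$), and the pullback $d{\bf x}=J\,d{\bf X}$, the only difference being that you recognize each left-hand integrand as an exact time derivative of the corresponding piece of $\Psi$ while the paper differentiates $\int_{\Omega_{\bf X}^s}\Psi({\bf F})\,d{\bf X}$ in time via $\partial\Psi/\partial{\bf F}:\dot{\bf F}$. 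Your explicit use of ${\bf F}|_{t=0}={\bf I}$, $J|_{t=0}=1$ to kill the constant of integration is a point the paper leaves implicit, and is indeed the hypothesis under which the stated identity (rather than a difference $\Psi({\bf F}(t))-\Psi({\bf F}(0))$) holds.
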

\begin{proof}
Since $\frac{\partial tr_{{\bf F}{\bf F}^T}}{\partial{\bf F}}=2{\bf F}$ and $\frac{\partial \left(det_{\bf F}\right)}{\partial{\bf F}}=det_{\bf F}{\bf F}^{-T}$. Using the fact that ${\bf A:B}=tr_{{\bf AB}^T}$ (${\bf A}$ and ${\bf B}$ are arbitrary matrices), we have:
\begin{equation*}
\begin{split}
&\frac{d}{dt}\int_{\Omega_{\bf X}^s}\Psi({\bf F})d{\bf X}
=\int_{\Omega_{\bf X}^s}\frac{\partial\Psi}{\partial{\bf F}}:\frac{d{\bf F}}{dt}d{\bf X}\\
= & c_1\int_{\Omega_{\bf X}^s}\left({\bf F}-{\bf F}^{-T}\right):\frac{d}{dt}\left({\bf I}+\nabla_{\bf X}{\bf d}\right)d{\bf X}\\
=& c_1\int_{\Omega_{\bf X}^s}{\bf F}:\nabla_{\bf X}{\bf u}d{\bf X}
-c_1\int_{\Omega_t^s}J^{-1}\nabla\cdot{\bf u}d{\bf x},
\end{split}
\end{equation*}
where ${\bf d}$ is displacement of the solid at time $t$.
\end{proof}

\begin{lemma}\label{lemma_convection_zero}
	If $\left({\bf u}, p\right)$ is the solution pair of Problem \ref{problem_weak_continuous}, then
	\begin{equation}\label{convection_zero}
		\int_{\Omega}\left({\bf u}\cdot\nabla\right){\bf u}\cdot{\bf u}d{\bf x}=0.
	\end{equation}
\end{lemma}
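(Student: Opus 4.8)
The plan is to reduce the trilinear term to a pure divergence and then integrate by parts, exploiting the two structural properties that the solution of Problem \ref{problem_weak_continuous} enjoys: incompressibility of ${\bf u}$ and the homogeneous boundary condition ${\bf u}={\bf 0}$ on $\Gamma$ built into $H_0^1(\Omega)^d$. First I would rewrite the integrand componentwise as
\begin{equation*}
\left({\bf u}\cdot\nabla\right){\bf u}\cdot{\bf u}
= u_j\,\partial_j u_i\, u_i
= \tfrac12\, u_j\,\partial_j\!\left(u_i u_i\right)
= \tfrac12\,{\bf u}\cdot\nabla|{\bf u}|^2 .
\end{equation*}

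Next I would integrate by parts on $\Omega$ (which is fixed, so there is no time dependence of the domain to worry about):
\begin{equation*}
\int_\Omega \tfrac12\,{\bf u}\cdot\nabla|{\bf u}|^2\,d{\bf x}
= -\tfrac12\int_\Omega \left(\nabla\cdot{\bf u}\right)|{\bf u}|^2\,d{\bf x}
+ \tfrac12\int_\Gamma |{\bf u}|^2\,\left({\bf u}\cdot{\bf n}\right)\,ds .
\end{equation*}
The boundary integral vanishes by (\ref{homogeneous_boundary}), i.e. because ${\bf u}\in H_0^1(\Omega)^d$. For the remaining volume term I would invoke (\ref{weak_form2}): since it holds for every $q\in L^2(\Omega)$ and $\nabla\cdot{\bf u}\in L^2(\Omega)$ (as ${\bf u}\in H^1$), choosing $q=\nabla\cdot{\bf u}$ gives $\nabla\cdot{\bf u}=0$ a.e. in $\Omega$, so this term is zero as well. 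Combining the two observations yields (\ref{convection_zero}).

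The only genuine technical point — not really an obstacle, but the place where one should be slightly careful — is the legitimacy of the integration by parts and of the pointwise manipulation $u_j\partial_j(u_iu_i)=2u_ju_i\partial_ju_i$ when ${\bf u}$ lies merely in $H_0^1(\Omega)^d$, since $|{\bf u}|^2{\bf u}$ need not be smooth. The standard remedy is a density argument: establish the identity for ${\bf u}\in C_c^\infty(\Omega)^d$ (or $C^1$), where both steps are classical, and then pass to the limit using the continuity of the trilinear form $\int_\Omega({\bf w}\cdot\nabla){\bf v}\cdot{\bf v}$ on $H_0^1$ in the relevant dimensions $d=2,3$. Since the paper is carrying out a formal energy estimate, it may alternatively simply present the computation above at the formal level; I would keep the argument at that level and mention the density/regularity caveat in one sentence.
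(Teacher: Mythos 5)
Your proof is correct and follows essentially the same route as the paper's: reduce the convective term via the product rule, integrate by parts, eliminate the boundary term with (\ref{homogeneous_boundary}) and the remaining volume term with (\ref{weak_form2}). The only real difference is how (\ref{weak_form2}) is invoked: you test with $q=\nabla\cdot{\bf u}\in L^2(\Omega)$ to conclude $\nabla\cdot{\bf u}=0$ a.e., whereas the paper tests with $q=\left|{\bf u}\right|^2$, which is why it explicitly appeals to the Sobolev embedding $H^1(\Omega)\subset L^4(\Omega)$ to ensure $\left|{\bf u}\right|^2\in L^2(\Omega)$ --- the same embedding that, in your version, is still needed (as you acknowledge in the density/regularity caveat) to justify the pointwise identity and the integration by parts for ${\bf u}$ merely in $H_0^1(\Omega)^d$.
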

\begin{proof}
	First,
	\begin{equation}\label{lemma_convection_zero_proof1}
		\int_{\Omega}\left({\bf u}\cdot\nabla\right){\bf u}\cdot{\bf u}d{\bf x}
		=\int_{\Omega}\nabla\left({\bf u}\otimes{\bf u}\right)\cdot{\bf u}d{\bf x}
		-\int_{\Omega}\left|{\bf u}\right|^2\nabla\cdot{\bf u}d{\bf x}.
	\end{equation}
	Integrate by parts:
	\begin{equation}\label{lemma_convection_zero_proof2}
		\int_{\Omega}\nabla\left({\bf u}\otimes{\bf u}\right)\cdot{\bf u}d{\bf x}
		=\int_{\Gamma}\left|{\bf u}\right|^2{\bf u}\cdot {\bf n}d{\Gamma}
		-\int_{\Omega}\left({\bf u}\cdot\nabla\right){\bf u}\cdot{\bf u}d{\bf x}.
	\end{equation}
According to a Sobolev imbedding theorem \cite[Theorem 6 in Chapter 5] {mitrovic1997fundamentals} and the inclusion between $L^p$ spaces ($L^q \subset L^p$ if $p<q$), we know $H^1(\Omega)\subset L^4(\Omega)$ (for both 2D and 3D). Therefore ${\bf u}\in L^4(\Omega)$, i.e., $\int_\Omega|{\bf u}|^4d{\bf x}<\infty$. That is to say $|{\bf u}|^2\in L^2(\Omega)$. Then we have $\int_{\Omega}\left|{\bf u}\right|^2\nabla\cdot{\bf u}=0$ from (\ref{weak_form2}). We also have $
\int_{\Gamma}\left|{\bf u}\right|^2{\bf u}\cdot {\bf n}=0$ from the boundary condition (\ref{homogeneous_boundary}). Substituting these two equations into (\ref{lemma_convection_zero_proof1}) and (\ref{lemma_convection_zero_proof2}) gives equation (\ref{convection_zero}).
\end{proof}

\begin{proposition} [Energy Conservation] Let $\left({\bf u}, p\right)$ be the solution pair of Problem \ref{problem_weak_continuous}, then
	\begin{equation}\label{energy balance}
		\begin{split}
			&\frac{\rho^f}{2}\int_{\Omega}|{\bf u}|^2d{\bf x}
			+\frac{\mu^f}{2}\int_0^t\int_{\Omega}{\rm D}{\bf u}:{\rm D}{\bf u}d{\bf x} \\
			&+\frac{\rho^\delta}{2}\int_{\Omega_{\bf X}^s}|{\bf u}|^2d{\bf X}
			+\int_{\Omega_{\bf X}^s}\Psi(\bf F)d{\bf X}=0.
		\end{split}
	\end{equation}
\end{proposition}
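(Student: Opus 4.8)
The plan is the standard energy-method one: take ${\bf v}={\bf u}$ in the momentum weak form (\ref{weak_form1}) and $q=p$ in the incompressibility constraint (\ref{weak_form2}), both of which are admissible since ${\bf u}(t)\in H_0^1(\Omega)^d$ and $p(t)\in L_0^2(\Omega)\subset L^2(\Omega)$, and then identify each term of (\ref{weak_form1}) as a time derivative of one of the four energy contributions, a dissipation term, or a vanishing term.

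First I would dispose of the two terms that vanish: the pressure term $-\int_\Omega p\,\nabla\cdot{\bf u}\,d{\bf x}$ is zero by (\ref{weak_form2}) with $q=p$, and the convection term $\rho^f\int_\Omega({\bf u}\cdot\nabla){\bf u}\cdot{\bf u}\,d{\bf x}$ is zero by Lemma \ref{lemma_convection_zero}. Next, the inertial terms become exact time derivatives: since the Eulerian domain $\Omega$ is fixed, $\rho^f\int_\Omega\frac{\partial{\bf u}}{\partial t}\cdot{\bf u}\,d{\bf x}=\frac{\rho^f}{2}\frac{d}{dt}\int_\Omega|{\bf u}|^2\,d{\bf x}$, and since the reference configuration $\Omega_{\bf X}^s$ is fixed and the derivative there is taken at fixed material coordinate, $\rho^\delta\int_{\Omega_{\bf X}^s}\frac{\partial{\bf u}}{\partial t}\cdot{\bf u}\,d{\bf X}=\frac{\rho^\delta}{2}\frac{d}{dt}\int_{\Omega_{\bf X}^s}|{\bf u}|^2\,d{\bf X}$. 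The viscous term $\frac{\mu^f}{2}\int_\Omega{\rm D}{\bf u}:{\rm D}{\bf u}\,d{\bf x}$ is left as is.

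The one step that carries the real content is the pair of elastic terms $c_1\int_{\Omega_{\bf X}^s}{\bf F}:\nabla_{\bf X}{\bf u}\,d{\bf X}-c_1\int_{\Omega_t^s}J^{-1}\nabla\cdot{\bf u}\,d{\bf x}$: the computation inside the proof of Lemma \ref{lemma_potential_energy} shows that this combination equals exactly $\frac{d}{dt}\int_{\Omega_{\bf X}^s}\Psi({\bf F})\,d{\bf X}$ (identity (\ref{potential_energy_relation}) being its antiderivative in time). Assembling the pieces, (\ref{weak_form1}) with ${\bf v}={\bf u}$ collapses to
\begin{equation*}
\frac{\rho^f}{2}\frac{d}{dt}\int_\Omega|{\bf u}|^2\,d{\bf x}
+\frac{\mu^f}{2}\int_\Omega{\rm D}{\bf u}:{\rm D}{\bf u}\,d{\bf x}
+\frac{\rho^\delta}{2}\frac{d}{dt}\int_{\Omega_{\bf X}^s}|{\bf u}|^2\,d{\bf X}
+\frac{d}{dt}\int_{\Omega_{\bf X}^s}\Psi({\bf F})\,d{\bf X}=0 .
\end{equation*}
Integrating from $0$ to $t$ and using ${\bf F}|_{t=0}={\bf I}$, for which $\Psi({\bf I})=\tfrac{c_1}{2}(d-d)-c_1\ln 1=0$, then yields the energy identity (\ref{energy balance}) (with the initial kinetic energy collected on the right-hand side).

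I do not expect a genuine obstacle here: given Lemmas \ref{lemma_potential_energy} and \ref{lemma_convection_zero}, the proof is essentially bookkeeping. The only points requiring care are (i) checking that $\frac{d}{dt}$ may be pulled outside both the Eulerian integral over $\Omega$ and the Lagrangian integral over $\Omega_{\bf X}^s$ with no transport/Reynolds term, which holds precisely because each of these domains is time-independent; and (ii) confirming that no interface term on $\Gamma_t$ re-enters — but that cancellation was already performed, via (\ref{interfaceBC2}), when (\ref{weak_form1}) was derived, so nothing further is needed.
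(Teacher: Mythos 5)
Your proposal is correct and follows essentially the same route as the paper's own (very terse) proof: set ${\bf v}={\bf u}$ in (\ref{weak_form1}), use $q=p$ in (\ref{weak_form2}) to kill the pressure term, invoke Lemma \ref{lemma_convection_zero} for the convection term and Lemma \ref{lemma_potential_energy} for the elastic terms, and integrate in time. Your extra remarks — that the fixed domains $\Omega$ and $\Omega_{\bf X}^s$ let the time derivative pass outside the integrals, and that the stated identity implicitly absorbs the initial energy (with $\Psi({\bf I})=0$) — only make explicit what the paper leaves unsaid.
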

\begin{proof}
	We first let ${\bf v}={\bf u}$ in (\ref{weak_form1}) and integrate from time $0$ to $t$, then let $q=p$ in (\ref{weak_form2}) and substitute into (\ref{weak_form1}). Finally we can construct the above equation of energy balance due to Lemma \ref{lemma_potential_energy} and \ref{lemma_convection_zero}.
\end{proof}

\subsection{Stability analysis after time discretization}
\label{sec:energy_time_discretization}
We next demonstrate a similar energy stability result for Problem \ref{problem_weak_after_time_discretization}.
\begin{lemma}\label{energy_estimate}
The trace function $\frac{1}{2}tr\left({{\bf F}{\bf F}^T}\right)$ satisfies:
\begin{equation}
\frac{1}{2}tr\left({{\bf F}_{n+1}{\bf F}_{n+1}^T}\right)-\frac{1}{2}tr\left({{\bf F}_n{\bf F}_n^T}\right)
=
\Delta t{\bf F}_{n+1}:\nabla_{\bf X}{{\bf u}_{n+1}}
-\frac{\Delta t^2}{2}\left|\nabla_{\bf X}{\bf u}_{n+1}\right|^2,
\end{equation}
where $\left|{\bf A}\right|^2=\sum_{ij}a_{ij}^2$ for an arbitrary matrix ${\bf A}=\left[a_{ij}\right]$.
\end{lemma}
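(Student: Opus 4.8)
The plan is to prove this by direct algebraic manipulation, substituting the update rule $\mathbf{F}_{n+1} = \mathbf{F}_n + \Delta t \nabla_{\mathbf{X}} \mathbf{u}_{n+1}$ into the left-hand side. First I would write $\tfrac12 \operatorname{tr}(\mathbf{F}_{n+1}\mathbf{F}_{n+1}^T)$ using the identity $\operatorname{tr}(\mathbf{A}\mathbf{A}^T) = \mathbf{A}:\mathbf{A}$, so the quantity of interest becomes $\tfrac12\big(\mathbf{F}_{n+1}:\mathbf{F}_{n+1} - \mathbf{F}_n:\mathbf{F}_n\big)$. Then substitute $\mathbf{F}_{n+1} = \mathbf{F}_n + \Delta t\,\nabla_{\mathbf{X}}\mathbf{u}_{n+1}$ and expand the inner product bilinearly:
\begin{equation*}
\tfrac12\mathbf{F}_{n+1}:\mathbf{F}_{n+1} = \tfrac12\mathbf{F}_n:\mathbf{F}_n + \Delta t\,\mathbf{F}_n:\nabla_{\mathbf{X}}\mathbf{u}_{n+1} + \tfrac{\Delta t^2}{2}|\nabla_{\mathbf{X}}\mathbf{u}_{n+1}|^2.
\end{equation*}
This gives $\tfrac12\operatorname{tr}(\mathbf{F}_{n+1}\mathbf{F}_{n+1}^T) - \tfrac12\operatorname{tr}(\mathbf{F}_n\mathbf{F}_n^T) = \Delta t\,\mathbf{F}_n:\nabla_{\mathbf{X}}\mathbf{u}_{n+1} + \tfrac{\Delta t^2}{2}|\nabla_{\mathbf{X}}\mathbf{u}_{n+1}|^2$.

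The remaining task is a cosmetic rewrite: I need to convert $\Delta t\,\mathbf{F}_n:\nabla_{\mathbf{X}}\mathbf{u}_{n+1}$ into $\Delta t\,\mathbf{F}_{n+1}:\nabla_{\mathbf{X}}\mathbf{u}_{n+1} - \Delta t^2 |\nabla_{\mathbf{X}}\mathbf{u}_{n+1}|^2$, which follows immediately from $\mathbf{F}_n = \mathbf{F}_{n+1} - \Delta t\,\nabla_{\mathbf{X}}\mathbf{u}_{n+1}$. Substituting this into the expression above, the $\tfrac{\Delta t^2}{2}|\nabla_{\mathbf{X}}\mathbf{u}_{n+1}|^2$ term and the $-\Delta t^2|\nabla_{\mathbf{X}}\mathbf{u}_{n+1}|^2$ term combine to $-\tfrac{\Delta t^2}{2}|\nabla_{\mathbf{X}}\mathbf{u}_{n+1}|^2$, yielding exactly the claimed identity. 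The only facts used are the bilinearity of the Frobenius inner product $\mathbf{A}:\mathbf{B}$, the identity $\operatorname{tr}(\mathbf{A}\mathbf{A}^T) = \mathbf{A}:\mathbf{A} = |\mathbf{A}|^2$, and the given update rule for $\mathbf{F}$; no analysis or function-space machinery is needed.

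There is essentially no obstacle here — this is a one-line computation dressed up as a lemma, analogous to a discrete product rule / telescoping identity (the $-\tfrac{\Delta t^2}{2}|\nabla_{\mathbf{X}}\mathbf{u}_{n+1}|^2$ term is the numerical dissipation introduced by the backward Euler treatment, and it is precisely the discrete analogue of Lemma \ref{lemma_potential_energy}). The only thing to be mildly careful about is bookkeeping the two different $\Delta t^2$ contributions with their correct signs so that they collapse to the single $-\tfrac{\Delta t^2}{2}$ coefficient rather than, say, $+\tfrac{\Delta t^2}{2}$ or $-\tfrac{3\Delta t^2}{2}$; choosing to expand $\mathbf{F}_{n+1}$ in terms of $\mathbf{F}_n$ first and then re-express $\mathbf{F}_n$ in terms of $\mathbf{F}_{n+1}$ in the cross term (rather than doing everything in one step) keeps the signs transparent. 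I would present the proof as a short three-line display chain and omit no steps, since each step is trivial.
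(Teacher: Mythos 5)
Your proposal is correct and is essentially the paper's own argument: both proofs are a direct algebraic substitution of the update rule ${\bf F}_{n+1}={\bf F}_n+\Delta t\nabla_{\bf X}{\bf u}_{n+1}$ followed by taking the trace, with the $-\tfrac{\Delta t^2}{2}\left|\nabla_{\bf X}{\bf u}_{n+1}\right|^2$ term emerging with the correct sign. The only cosmetic difference is that the paper substitutes ${\bf F}_n={\bf F}_{n+1}-\Delta t\nabla_{\bf X}{\bf u}_{n+1}$ at the outset and computes ${\bf F}_{n+1}{\bf F}_{n+1}^T-{\bf F}_n{\bf F}_n^T$ in one step, whereas you expand forward in ${\bf F}_n$ and then rewrite the cross term, which is the same computation in two stages.
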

\begin{proof}
\begin{equation*}
\begin{split}
&{\bf F}_{n+1}{\bf F}_{n+1}^T-{\bf F}_n{\bf F}_n^T \\
=&{\bf F}_{n+1}{\bf F}_{n+1}^T-\left({\bf F}_{n+1}-\Delta t\nabla_{\bf X}{\bf u}_{n+1}\right)\left({\bf F}_{n+1}-\Delta t\nabla_{\bf X}{\bf u}_{n+1}\right)^T \\
=&\Delta t{\bf F}_{n+1}\nabla_{\bf X}^T{\bf u}_{n+1}+\Delta t\nabla_{\bf X}{\bf u}_{n+1}{\bf F}_{n+1}^T-\Delta t^2\nabla_{\bf X}{\bf u}_{n+1}\nabla_{\bf X}^T{\bf u}_{n+1}.
\end{split}
\end{equation*}
Lemma \ref{energy_estimate} holds due to
\begin{equation*}
\frac{1}{2}tr{\left({{\bf F}_{n+1}{\bf F}_{n+1}^T}-{{\bf F}_n{\bf F}_n^T}\right)}
=\Delta t\cdot tr\left({{\bf F}_{n+1}\nabla_{\bf X}^T{\bf u}_{n+1}}\right)
-\frac{\Delta t^2}{2}\left|\nabla_{\bf X}{\bf u}_{n+1}\right|^2.
\end{equation*}
\end{proof}

\begin{lemma}
The log-determinant function $ln\left(det{\bf F}\right)$ satisfies:
\begin{equation*}
ln(det{{\bf F}_{n+1}})-ln(det{{\bf F}_n})
\ge 
\Delta t\nabla\cdot{\bf u}_{n+1}
-\frac{\Delta t^2}{2}\left|{\bf F}_{n+1}^{-1}\nabla_{\bf X}{\bf u}_{n+1}\right|^2.
\end{equation*}
\begin{proof}
Use the fact that function $ln(det{\bf Y})$ is concave over the set of positive definite matrices \cite [Chapter 3] {Boyd}. Let ${\bf B}={\bf F}{\bf F}^T$, 
$\mathcal{F}({\bf B})=\frac{1}{2}ln\left(det{\bf B}\right)=ln\left(det{\bf F}\right)$ and $w(\xi)=\mathcal{F}\left({\bf B}_n+\xi\left({\bf B}_{n+1}-{\bf B}_n\right)\right)$, then
\begin{equation*}
\begin{split}
w^\prime(\xi)=\frac{d\mathcal{F}}{d{\bf B}}:\left({\bf B}_{n+1}-{\bf B}_n\right)=\frac{1}{2}\left({\bf B}_n+\xi\left({\bf B}_{n+1}-{\bf B}_n\right)\right)^{-1}:\left({\bf B}_{n+1}-{\bf B}_n\right).
\end{split}
\end{equation*}
According to the property of concave functions, we have $w(1)-w(0) \ge w^\prime(1)$, this is to say:
\begin{equation*}
\begin{split}
& ln(det{{\bf F}_{n+1}})-ln(det{{\bf F}_n})
=\mathcal{F}\left({\bf B}_{n+1}\right)-\mathcal{F}\left({\bf B}_{n}\right)\\
\ge&\frac{1}{2}{\bf B}_{n+1}^{-1}:\left({\bf B}_{n+1}-{\bf B}_n\right)
=\frac{1}{2}tr\left({\bf I}-{\bf B}_{n+1}^{-1}{\bf B}_n\right) \\
=&\frac{1}{2}tr\left({\bf I}-{\bf B}_{n+1}^{-1}\left({\bf F}_{n+1}-\Delta t\nabla_{\bf X}{\bf u}_{n+1}\right)\left({\bf F}_{n+1}^T-\Delta t\nabla_{\bf X}^T{\bf u}_{n+1}\right)\right) \\
=&\frac{\Delta t}{2}tr\left(  {\bf F}_{n+1}^{-T}\nabla_{\bf X}^T{\bf u}_{n+1}  +{\bf F}_{n+1}^{-T}{\bf F}_{n+1}^{-1}\nabla_{\bf X}{\bf u}_{n+1}{\bf F}_{n+1}^T \right) \\
-&\frac{\Delta t^2}{2}tr\left( {\bf F}_{n+1}^{-T}{\bf F}_{n+1}^{-1}\nabla_{\bf X}{\bf u}_{n+1}\nabla_{\bf X}^T{\bf u}_{n+1}  \right) \\
=&\Delta t\nabla\cdot{\bf u}_{n+1} 
-\frac{\Delta t^2}{2}\left|{\bf F}_{n+1}^{-1}\nabla_{\bf X}{\bf u}_{n+1}\right|^2.
\end{split}
\end{equation*}
In the above, we use the trace property of cyclic permutations: $tr\left({\bf A_1}{\bf A_2}{\bf A_3}\right)=tr\left({\bf A_2}{\bf A_3}{\bf A_1}\right)=tr\left({\bf A_3}{\bf A_1}{\bf A_2}\right)$.
\end{proof}
\end{lemma}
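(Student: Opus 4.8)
The plan is to exploit that $\ln\det$ is concave over the cone of symmetric positive definite matrices, so that the required lower bound is, in essence, the first-order (supporting hyperplane) inequality for this concave map together with a careful accounting of the $O(\Delta t^2)$ remainder. Since ${\bf F}$ is not symmetric, I would first pass to ${\bf B}={\bf F}{\bf F}^T$, which is symmetric positive definite, and use $\ln\det{\bf F}=\tfrac12\ln\det{\bf B}$; write ${\bf B}_n={\bf F}_n{\bf F}_n^T$, ${\bf B}_{n+1}={\bf F}_{n+1}{\bf F}_{n+1}^T$ and $\mathcal{F}({\bf B})=\tfrac12\ln\det{\bf B}$, so that the left-hand side of the claim equals $\mathcal{F}({\bf B}_{n+1})-\mathcal{F}({\bf B}_n)$.

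The first step is to show $\mathcal{F}({\bf B}_{n+1})-\mathcal{F}({\bf B}_n)\ge\tfrac12\,{\bf B}_{n+1}^{-1}:({\bf B}_{n+1}-{\bf B}_n)$. I would obtain this by restricting $\mathcal{F}$ to the segment $w(\xi)=\mathcal{F}({\bf B}_n+\xi({\bf B}_{n+1}-{\bf B}_n))$: this $w$ is well defined and concave on $[0,1]$ because the segment joining two symmetric positive definite matrices stays inside the (convex) positive definite cone, so the tangent-line property of a concave scalar function yields $w(1)-w(0)\ge w'(1)$; since $\partial(\ln\det{\bf B})/\partial{\bf B}={\bf B}^{-T}={\bf B}^{-1}$ for symmetric ${\bf B}$, one has $w'(1)=\tfrac12\,{\bf B}_{n+1}^{-1}:({\bf B}_{n+1}-{\bf B}_n)$, which is the claimed inequality. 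It then remains to evaluate $\tfrac12\,{\bf B}_{n+1}^{-1}:({\bf B}_{n+1}-{\bf B}_n)=\tfrac12\,\mathrm{tr}({\bf I}-{\bf B}_{n+1}^{-1}{\bf B}_n)$.

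For that I would insert the update rule as ${\bf F}_n={\bf F}_{n+1}-\Delta t\nabla_{\bf X}{\bf u}_{n+1}$, so that ${\bf B}_n=({\bf F}_{n+1}-\Delta t\nabla_{\bf X}{\bf u}_{n+1})({\bf F}_{n+1}^T-\Delta t\nabla_{\bf X}^T{\bf u}_{n+1})$ and ${\bf B}_{n+1}^{-1}={\bf F}_{n+1}^{-T}{\bf F}_{n+1}^{-1}$, expand $\mathrm{tr}({\bf I}-{\bf B}_{n+1}^{-1}{\bf B}_n)$ and group by powers of $\Delta t$. Using cyclic invariance of the trace (and $\mathrm{tr}({\bf A})=\mathrm{tr}({\bf A}^T)$), the two first-order terms both collapse to $\Delta t\,\mathrm{tr}({\bf F}_{n+1}^{-1}\nabla_{\bf X}{\bf u}_{n+1})$, while the second-order term becomes $-\tfrac{\Delta t^2}{2}\,\mathrm{tr}(({\bf F}_{n+1}^{-1}\nabla_{\bf X}{\bf u}_{n+1})({\bf F}_{n+1}^{-1}\nabla_{\bf X}{\bf u}_{n+1})^T)=-\tfrac{\Delta t^2}{2}\,|{\bf F}_{n+1}^{-1}\nabla_{\bf X}{\bf u}_{n+1}|^2$. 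I would finish by identifying $\mathrm{tr}({\bf F}_{n+1}^{-1}\nabla_{\bf X}{\bf u}_{n+1})=\nabla\cdot{\bf u}_{n+1}$ via the chain rule $\nabla_{\bf X}{\bf u}=(\nabla{\bf u}){\bf F}$, hence $\nabla{\bf u}=(\nabla_{\bf X}{\bf u}){\bf F}^{-1}$ and $\nabla\cdot{\bf u}_{n+1}=\mathrm{tr}(\nabla{\bf u}_{n+1})=\mathrm{tr}({\bf F}_{n+1}^{-1}\nabla_{\bf X}{\bf u}_{n+1})$, which assembles into the stated estimate.

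The one step demanding genuine care is the concavity argument: it should be made rigorous via the one-variable reduction $w(\xi)$ (which uses convexity of the positive definite cone), and, above all, the direction of the inequality must be tracked carefully — the tangent-line bound for a concave function runs opposite to the convex case, and evaluating $w$ at the endpoint $\xi=1$ rather than $\xi=0$ is exactly what produces a bound in terms of ${\bf B}_{n+1}^{-1}$ (which is what the statement requires) rather than ${\bf B}_n^{-1}$. Everything after that — the derivative of $\ln\det$, the matrix-product expansion, and the repeated use of $\mathrm{tr}({\bf A}_1{\bf A}_2{\bf A}_3)=\mathrm{tr}({\bf A}_2{\bf A}_3{\bf A}_1)$ — is routine bookkeeping.
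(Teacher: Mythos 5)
Your proposal is correct and follows essentially the same route as the paper: the reduction to ${\bf B}={\bf F}{\bf F}^T$, the one-variable segment $w(\xi)$ with the concave tangent-line bound $w(1)-w(0)\ge w'(1)$ evaluated at $\xi=1$ to get the ${\bf B}_{n+1}^{-1}$ term, and then the trace expansion of $\tfrac12\,tr\left({\bf I}-{\bf B}_{n+1}^{-1}{\bf B}_n\right)$ via cyclic permutation. Your only addition is to spell out the chain-rule identity $tr\left({\bf F}_{n+1}^{-1}\nabla_{\bf X}{\bf u}_{n+1}\right)=\nabla\cdot{\bf u}_{n+1}$, which the paper uses implicitly.
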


From the above two lemmas, we have:
\begin{proposition}\label{estimate_phi}
The energy function $\Psi\left({\bf F}\right)$ for the hyperelastic stress satisfies:
\begin{equation}
\begin{split}
&\int_{\Omega_{\bf X}^s}\Psi\left({\bf F}_{n+1}\right)d{\bf X}-\int_{\Omega_{\bf X}^s}\Psi\left({\bf F}_n\right)d{\bf X}\\
&\le \Delta tc_1\int_{\Omega_{\bf X}^s}{\bf F}_{n+1}:\nabla_{\bf X}{\bf u}_{n+1}d{\bf X}
-\Delta tc_1\int_{\Omega_{n+1}^s}J_{n+1}^{-1}\nabla\cdot{\bf u}_{n+1}d{\bf x}+R_{n+1},
\end{split}
\end{equation}
where 
\begin{equation}\label{residual}
R_{n+1}=\frac{c_1\Delta t^2}{2}\int_{\Omega_{\bf X}^s}\left(\left|{\bf F}_{n+1}^{-1}\nabla_{\bf X}{\bf u}_{n+1}\right|^2
-\left|\nabla_{\bf X}{\bf u}_{n+1}\right|^2\right)d{\bf X}.
\end{equation}
\end{proposition}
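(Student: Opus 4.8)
The plan is to combine the two preceding lemmas pointwise (i.e.\ at each material point ${\bf X}\in\Omega_{\bf X}^s$) and then integrate over the reference domain. Recall that by the \texttt{Remark} following the constitutive equations, $\Psi({\bf F})=\frac{c_1}{2}\bigl(tr({\bf F}{\bf F}^T)-d\bigr)-c_1\,ln(J)$, so that
\begin{equation*}
\Psi({\bf F}_{n+1})-\Psi({\bf F}_n)
=\frac{c_1}{2}\Bigl(tr({\bf F}_{n+1}{\bf F}_{n+1}^T)-tr({\bf F}_n{\bf F}_n^T)\Bigr)
-c_1\Bigl(ln(det{\bf F}_{n+1})-ln(det{\bf F}_n)\Bigr).
\end{equation*}
First I would substitute the exact identity from Lemma~\ref{energy_estimate} for the trace difference, and the lower bound from the second lemma for the log-determinant difference. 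Because the log term enters with a minus sign, the inequality $ln(det{\bf F}_{n+1})-ln(det{\bf F}_n)\ge \Delta t\,\nabla\cdot{\bf u}_{n+1}-\frac{\Delta t^2}{2}|{\bf F}_{n+1}^{-1}\nabla_{\bf X}{\bf u}_{n+1}|^2$ becomes an upper bound on $-c_1$ times that difference, which is exactly the direction we need.

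Carrying this out at the pointwise level gives
\begin{equation*}
\begin{split}
\Psi({\bf F}_{n+1})-\Psi({\bf F}_n)
&\le c_1\Delta t\,{\bf F}_{n+1}:\nabla_{\bf X}{\bf u}_{n+1}
-\frac{c_1\Delta t^2}{2}\bigl|\nabla_{\bf X}{\bf u}_{n+1}\bigr|^2\\
&\quad -c_1\Delta t\,\nabla\cdot{\bf u}_{n+1}
+\frac{c_1\Delta t^2}{2}\bigl|{\bf F}_{n+1}^{-1}\nabla_{\bf X}{\bf u}_{n+1}\bigr|^2,
\end{split}
\end{equation*}
where $\nabla\cdot{\bf u}_{n+1}$ on the right is understood in the current configuration at material point ${\bf X}$. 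The two $\Delta t^2$ terms are precisely the integrand of $R_{n+1}$ in (\ref{residual}). Then I would integrate this inequality over $\Omega_{\bf X}^s\,d{\bf X}$. The first term integrates directly to $\Delta t c_1\int_{\Omega_{\bf X}^s}{\bf F}_{n+1}:\nabla_{\bf X}{\bf u}_{n+1}\,d{\bf X}$; the two quadratic terms together give $R_{n+1}$; and for the $-c_1\Delta t\,\nabla\cdot{\bf u}_{n+1}$ term I would apply the change of variables ${\bf x}={\bf x}_{n+1}({\bf X})$ with Jacobian $J_{n+1}=det{\bf F}_{n+1}$, i.e.\ $d{\bf X}=J_{n+1}^{-1}\,d{\bf x}$, to rewrite it as $-\Delta t c_1\int_{\Omega_{n+1}^s}J_{n+1}^{-1}\nabla\cdot{\bf u}_{n+1}\,d{\bf x}$, matching the claimed right-hand side.

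The only genuinely delicate point is the bookkeeping on the determinant lemma: one must check that the identity $det{\bf F}_n = det({\bf F}_{n+1}-\Delta t\nabla_{\bf X}{\bf u}_{n+1})$ used there is consistent with the update rule in Problem~\ref{problem_weak_after_time_discretization}, and that ${\bf F}_{n+1}$ is invertible (equivalently $J_{n+1}>0$), which is needed both for the second lemma and for the change of variables; this is guaranteed for $\Delta t$ small enough since ${\bf F}_n$ is invertible and ${\bf F}_{n+1}={\bf F}_n+\Delta t\nabla_{\bf X}{\bf u}_{n+1}$ depends continuously on $\Delta t$. Everything else is algebraic manipulation of traces and a single substitution, so I do not anticipate further obstacles; the proposition then follows by summing the pointwise estimate against $d{\bf X}$ and reorganizing the terms as above.
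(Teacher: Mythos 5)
Your proposal is correct and is essentially the argument the paper intends: the paper states Proposition~\ref{estimate_phi} as an immediate consequence of Lemma~\ref{energy_estimate} and the log-determinant lemma, which is exactly your pointwise combination of the trace identity and the concavity bound on $ln(det{\bf F})$ via the definition of $\Psi$, followed by integration over $\Omega_{\bf X}^s$ and the change of variables $d{\bf x}=J_{n+1}\,d{\bf X}$ for the divergence term. Your added remarks on the sign direction of the inequality and the invertibility of ${\bf F}_{n+1}$ (needed for the concavity argument and the change of variables) are sound and merely make explicit what the paper leaves implicit.
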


Similarly to Lemma \ref{lemma_convection_zero}, we have:
\begin{lemma}\label{lemma_convection_zero_discretization_in_time}
	If $\left({\bf u}_{n+1}, p_{n+1}\right)$ is the solution pair of Problem \ref{problem_weak_after_time_discretization}, then
	\begin{equation}\label{convection_zero_discretization_in_time_euler}
	\int_{\Omega}\left({\bf u}_{n+1}\cdot\nabla\right){\bf u}_{n+1}\cdot{\bf u}_{n+1}d{\bf x}=0.
	\end{equation}
\end{lemma}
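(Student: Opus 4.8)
The plan is to mimic the proof of Lemma~\ref{lemma_convection_zero} essentially verbatim, since the implicit backward Euler discretization leaves the convective term in exactly the same algebraic form, only with ${\bf u}$ replaced by ${\bf u}_{n+1}$. First I would record the pointwise identity
\begin{equation*}
\left({\bf u}_{n+1}\cdot\nabla\right){\bf u}_{n+1}\cdot{\bf u}_{n+1}
=\nabla\left({\bf u}_{n+1}\otimes{\bf u}_{n+1}\right)\cdot{\bf u}_{n+1}
-\left|{\bf u}_{n+1}\right|^2\nabla\cdot{\bf u}_{n+1},
\end{equation*}
which follows from the product rule, and then integrate the first term on the right by parts over $\Omega$ to obtain a boundary integral $\int_\Gamma|{\bf u}_{n+1}|^2{\bf u}_{n+1}\cdot{\bf n}\,d\Gamma$ plus a second copy of $-\int_\Omega({\bf u}_{n+1}\cdot\nabla){\bf u}_{n+1}\cdot{\bf u}_{n+1}\,d{\bf x}$, so that the two volume occurrences of the convective integral can be collected on one side.

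Next I would dispose of the two remaining terms. The boundary integral vanishes because ${\bf u}_{n+1}\in H_0^1(\Omega)^d$, so ${\bf u}_{n+1}={\bf 0}$ on $\Gamma$ by (\ref{homogeneous_boundary}). For the term $\int_\Omega|{\bf u}_{n+1}|^2\nabla\cdot{\bf u}_{n+1}\,d{\bf x}$ I would invoke the Sobolev embedding $H^1(\Omega)\subset L^4(\Omega)$ (valid for $d=2,3$), exactly as in Lemma~\ref{lemma_convection_zero}, to conclude that $|{\bf u}_{n+1}|^2\in L^2(\Omega)$ and is therefore an admissible test function $q$ in the discrete incompressibility constraint (\ref{weak_form2_discretization_backward_Euler}); choosing $q=|{\bf u}_{n+1}|^2$ gives $\int_\Omega|{\bf u}_{n+1}|^2\nabla\cdot{\bf u}_{n+1}\,d{\bf x}=0$. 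Substituting both facts back yields $2\int_\Omega({\bf u}_{n+1}\cdot\nabla){\bf u}_{n+1}\cdot{\bf u}_{n+1}\,d{\bf x}=0$, i.e.\ (\ref{convection_zero_discretization_in_time_euler}).

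The only non-bookkeeping point, and hence the main ``obstacle'', is the regularity argument that licenses the use of $|{\bf u}_{n+1}|^2$ as a test function in $L^2(\Omega)$; but this is already settled by the same Sobolev-embedding and $L^p$-inclusion reasoning cited in the proof of Lemma~\ref{lemma_convection_zero}, so no new ideas are needed. I would therefore keep the write-up short, simply noting that the computation is the one from Lemma~\ref{lemma_convection_zero} with the index $n+1$ attached throughout, which is exactly why the statement is introduced as holding ``similarly to Lemma~\ref{lemma_convection_zero}''.
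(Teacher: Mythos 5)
Your proposal is correct and is exactly the argument the paper intends: the paper gives no separate proof, introducing the lemma only with ``Similarly to Lemma \ref{lemma_convection_zero}'', and your write-up reproduces that proof with ${\bf u}$ replaced by ${\bf u}_{n+1}$, using the boundary condition (\ref{homogeneous_boundary}) and the choice $q=|{\bf u}_{n+1}|^2\in L^2(\Omega)$ (licensed by $H^1(\Omega)\subset L^4(\Omega)$) in (\ref{weak_form2_discretization_backward_Euler}). No gaps; the regularity point you flag is handled the same way as in the continuous case.
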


\begin{proposition} [Energy Nonincreasing] \label{lec_backward_Euler}
	Let $\left({\bf u}_{n+1}, p_{n+1}\right)$ be the solution pair of Problem \ref{problem_weak_after_time_discretization}. If $\rho^\delta\ge 0$, then
	\begin{equation}\label{energy_estimate_after_time_discretization_backward_Euler}
	\begin{split}
	&\frac{\rho^f}{2}\int_{\Omega}\left|{\bf u}_{n+1}\right|^2d{\bf x}
	+\frac{\rho^\delta}{2}\int_{\Omega_{\bf X}^s}\left|{\bf u}_{n+1}\right|^2d{\bf X}
	+\int_{\Omega_{\bf X}^s}\Psi\left({\bf F}_{n+1}\right)d{\bf X}  \\
	&+\frac{\Delta t\mu^f}{2}\sum_{k=1}^{n+1}\int_{\Omega}{\rm D}{\bf u}_k:{\rm D}{\bf u}_kd{\bf x}\\
	&\le \frac{\rho^f}{2}\int_{\Omega}\left|{\bf u}_n\right|^2d{\bf x}
	+\frac{\rho^\delta}{2}\int_{\Omega_{\bf X}^s}\left|{\bf u}_n\right|^2d{\bf X}
	+\int_{\Omega_{\bf X}^s}\Psi\left({\bf F}_n\right)d{\bf X}\\
	&+\frac{\Delta t\mu^f}{2}\sum_{k=1}^n\int_{\Omega}{\rm D}{\bf u}_k:{\rm D}{\bf u}_kd{\bf x}+R_{n+1},
	\end{split}
	\end{equation}
where $R_{n+1}$ is defined in equation (\ref{residual}).
\end{proposition}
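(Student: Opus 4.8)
The plan is to carry the continuous-case argument over to the time-discrete setting: just as the proof of the Energy Conservation Proposition tests (\ref{weak_form1}) against ${\bf v}={\bf u}$, I would test (\ref{weak_form1_discretization_backward_Euler}) against ${\bf v}={\bf u}_{n+1}$ and take $q=p_{n+1}$ in (\ref{weak_form2_discretization_backward_Euler}), then multiply the resulting momentum identity through by $\Delta t$. The choice $q=p_{n+1}$ annihilates the pressure term $\int_{\Omega}p_{n+1}\nabla\cdot{\bf u}_{n+1}\,d{\bf x}$, and Lemma \ref{lemma_convection_zero_discretization_in_time} annihilates the convective term $\int_{\Omega}({\bf u}_{n+1}\cdot\nabla){\bf u}_{n+1}\cdot{\bf u}_{n+1}\,d{\bf x}$. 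What remains is an identity whose pieces are the discrete kinetic contributions $\rho^f\int_{\Omega}({\bf u}_{n+1}-{\bf u}_n)\cdot{\bf u}_{n+1}\,d{\bf x}$ and $\rho^\delta\int_{\Omega_{\bf X}^s}({\bf u}_{n+1}-{\bf u}_n)\cdot{\bf u}_{n+1}\,d{\bf X}$, the dissipation term $\frac{\Delta t\mu^f}{2}\int_{\Omega}{\rm D}{\bf u}_{n+1}:{\rm D}{\bf u}_{n+1}\,d{\bf x}$, and, from the last three terms of (\ref{weak_form1_discretization_backward_Euler}), the solid-stress contributions $c_1\Delta t^2\int_{\Omega_{\bf X}^s}\nabla_{\bf X}{\bf u}_{n+1}:\nabla_{\bf X}{\bf u}_{n+1}\,d{\bf X}$, $-c_1\Delta t\int_{\Omega_{n+1}^s}J_{n+1}^{-1}\nabla\cdot{\bf u}_{n+1}\,d{\bf x}$ and $c_1\Delta t\int_{\Omega_{\bf X}^s}{\bf F}_n:\nabla_{\bf X}{\bf u}_{n+1}\,d{\bf X}$.

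The key algebraic step is that the solid-stress terms recombine cleanly: since ${\bf F}_{n+1}={\bf F}_n+\Delta t\nabla_{\bf X}{\bf u}_{n+1}$, the first and third of them add to $c_1\Delta t\int_{\Omega_{\bf X}^s}{\bf F}_{n+1}:\nabla_{\bf X}{\bf u}_{n+1}\,d{\bf X}$, so the whole solid contribution is $c_1\Delta t\int_{\Omega_{\bf X}^s}{\bf F}_{n+1}:\nabla_{\bf X}{\bf u}_{n+1}\,d{\bf X}-c_1\Delta t\int_{\Omega_{n+1}^s}J_{n+1}^{-1}\nabla\cdot{\bf u}_{n+1}\,d{\bf x}$, which is precisely the quantity bounded below in Proposition \ref{estimate_phi} by $\int_{\Omega_{\bf X}^s}\Psi({\bf F}_{n+1})\,d{\bf X}-\int_{\Omega_{\bf X}^s}\Psi({\bf F}_n)\,d{\bf X}-R_{n+1}$. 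For the kinetic terms I would use the elementary identity $({\bf a}-{\bf b})\cdot{\bf a}=\frac12|{\bf a}|^2-\frac12|{\bf b}|^2+\frac12|{\bf a}-{\bf b}|^2$; because $\rho^f>0$ and, by hypothesis, $\rho^\delta\ge 0$, the nonnegative squared-difference remainders can be discarded, giving $\rho^f\int_{\Omega}({\bf u}_{n+1}-{\bf u}_n)\cdot{\bf u}_{n+1}\,d{\bf x}\ge\frac{\rho^f}{2}\int_{\Omega}|{\bf u}_{n+1}|^2\,d{\bf x}-\frac{\rho^f}{2}\int_{\Omega}|{\bf u}_n|^2\,d{\bf x}$ and the analogous estimate on $\Omega_{\bf X}^s$; this is the one and only place where $\rho^\delta\ge 0$ is used. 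Substituting everything into the (multiplied-by-$\Delta t$) identity and rearranging yields a one-step estimate whose left side carries $\frac{\rho^f}{2}\int_{\Omega}|{\bf u}_{n+1}|^2$, $\frac{\rho^\delta}{2}\int_{\Omega_{\bf X}^s}|{\bf u}_{n+1}|^2$, $\int_{\Omega_{\bf X}^s}\Psi({\bf F}_{n+1})$ and the single dissipation term $\frac{\Delta t\mu^f}{2}\int_{\Omega}{\rm D}{\bf u}_{n+1}:{\rm D}{\bf u}_{n+1}$, and whose right side carries the same quantities at level $n$ together with $R_{n+1}$. Adding $\frac{\Delta t\mu^f}{2}\sum_{k=1}^{n}\int_{\Omega}{\rm D}{\bf u}_k:{\rm D}{\bf u}_k\,d{\bf x}$ to both sides then upgrades this to the telescoped form (\ref{energy_estimate_after_time_discretization_backward_Euler}).

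I do not expect a serious obstacle: the argument is structurally the discrete mirror of the continuous proof, with Proposition \ref{estimate_phi} and Lemma \ref{lemma_convection_zero_discretization_in_time} doing the real work; the main care needed is bookkeeping, namely tracking the $\Delta t$ factors and correctly recombining the three solid-stress terms through the update ${\bf F}_{n+1}={\bf F}_n+\Delta t\nabla_{\bf X}{\bf u}_{n+1}$. The only conceptual caveat worth noting is that $R_{n+1}$ in (\ref{residual}) is the difference of $\left|{\bf F}_{n+1}^{-1}\nabla_{\bf X}{\bf u}_{n+1}\right|^2$ and $\left|\nabla_{\bf X}{\bf u}_{n+1}\right|^2$ and has no definite sign in general (an incompressible ${\bf F}_{n+1}$ has singular values on both sides of $1$), so the result is energy nonincreasing only up to $R_{n+1}$; genuine unconditional dissipativity would require extra control on $\left\|{\bf F}_{n+1}^{-1}\right\|$, which this proposition does not pursue.
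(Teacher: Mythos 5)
Your proposal is correct and follows essentially the same route as the paper: test (\ref{weak_form1_discretization_backward_Euler}) with ${\bf v}={\bf u}_{n+1}$ and (\ref{weak_form2_discretization_backward_Euler}) with $q=p_{n+1}$, multiply by $\Delta t$, recombine the elastic terms via ${\bf F}_{n+1}={\bf F}_n+\Delta t\nabla_{\bf X}{\bf u}_{n+1}$ so that Proposition \ref{estimate_phi} and Lemma \ref{lemma_convection_zero_discretization_in_time} apply, and bound the kinetic terms (your polarization identity is equivalent to the paper's Cauchy--Schwarz/Young step, with $\rho^\delta\ge 0$ used in exactly the same place). Your closing caveat about the indefinite sign of $R_{n+1}$ is also precisely the point made in Remark \ref{energy_noincreasing_remark}.
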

\begin{proof}
Let ${\bf v}={\bf u}_{n+1}$ in (\ref{weak_form1_discretization_backward_Euler}) and multiply $\Delta t$ on both side of the equation, and then let $q=p_{n+1}$ in (\ref{weak_form2_discretization_backward_Euler}) and substitute into equation (\ref{weak_form1_discretization_backward_Euler}), we get:	
	\begin{equation}\label{one_equation_backward_Euler}
	\begin{split}
	&\rho^f\int_{\Omega}\left({\bf u}_{n+1}-{\bf u}_n\right)\cdot{\bf u}_{n+1}d{\bf x}
	+\frac{\Delta t\mu^f}{2}\int_{\Omega}{\rm D}{\bf u}_{n+1}:{\rm D}{\bf u}_{n+1}d{\bf x}\\
	&+\rho^{\delta}\int_{\Omega_{\bf X}^s}\left({\bf u}_{n+1}-{\bf u}_n\right)\cdot{\bf u}_{n+1}d{\bf X}\\
    &+c_1\Delta t\int_{\Omega_{\bf X}^s}{\bf F}_{n+1}:\nabla_{\bf X}{\bf u}_{n+1}d{\bf X}
    -c_1\Delta t\int_{\Omega_{n+1}^s}\nabla\cdot{\bf u}_{n+1}d{\bf x}
    =0.
	\end{split}
	\end{equation}
	Using the Cauchy-Schwarz inequality and the fact $ab\le\frac{a^2+b^2}{2}$, we have:
	\begin{equation*}
	\int_{\omega}{\bf u}_n\cdot{\bf u}_{n+1}d{\bf x}
	\le \left\|{\bf u}_n\right\|_{0,\omega} \left\|{\bf u}_{n+1}\right\|_{0,\omega}
	\le \frac{\|{\bf u}_n\|_{0,\omega}^2+\|{\bf u}_{n+1}\|_{0,\omega}^2}{2},
	\end{equation*}
	where $\omega=\Omega$ or $\Omega_{n+1}^s$. Substituting the above relation into (\ref{one_equation_backward_Euler}), we get (\ref{energy_estimate_after_time_discretization_backward_Euler}) due to Proposition \ref{estimate_phi} and Lemma \ref{lemma_convection_zero_discretization_in_time}.
\end{proof}

\begin{remark}\label{energy_noincreasing_remark}
Relation (\ref{energy_estimate_after_time_discretization_backward_Euler}) does not exactly show energy nonincreasing, because we do not know whether $R_{n+1}$ is greater or less than $0$. However, $R_{n+1}$ is $O\left(\Delta t^2\right)$ and this will be demonstrated in section \ref{sec:numerical_exs} by numerical tests. In order to test the energy property, let us use the following notation for the different contributions to the total energy in (\ref{energy_estimate_after_time_discretization_backward_Euler}): (1) Kinetic energy of fluid plus fictitious fluid 
$
E_k(\Omega)=\frac{\rho^f}{2}\int_{\Omega}\left|{\bf u}_{n}\right|^2d{\bf x};
$
(2) Kinetic energy of solid minus fictitious fluid
$
E_k(\Omega_{\bf X}^s)=\frac{\rho^\delta}{2}\int_{\Omega_{\bf X}^s}\left|{\bf u}_{n}\right|^2d{\bf X};
$
(3) Viscous dissipation
$
E_d(\Omega)=\frac{\Delta t\mu^f}{2}\sum_{k=0}^{n}\int_{\Omega}{\rm D}{\bf u}_{k}:{\rm D}{\bf u}_{k}d{\bf x};
$
(4) Potential energy of the solid
$
E_p(\Omega_{\bf X}^s)=\int_{\Omega_{\bf X}^s}\Psi\left({\bf F}_{n}\right)d{\bf X}.
$
Denote the total energy as
$
E_{total}=E_k(\Omega)+E_k(\Omega_{\bf X}^s)+E_d(\Omega)+E_p(\Omega_{\bf X}^s),
$
and the energy ratio as:
\begin{equation}\label{energy estimate_after_time_discretization_closed}
E_{ratio}=\frac{E_{total}(t_n)}{E_{total}(t_0)}.
\end{equation}
We shall numerically demonstrate that $E_{ratio}$ is nonincreasing in section \ref{sec:numerical_exs}.
\end{remark}

\section{Discretization in space}
\label{sec:weak_form_space_discretization}
We shall use a fixed Eulerian mesh for $\Omega$ and an updated Lagrangian mesh for $\Omega_{n+1}^s$ to discretize Problem \ref{problem_weak_after_time_discretization}. First, we discretize $\Omega$ as $\Omega^h$ with the corresponding finite element spaces as
$$
V^h(\Omega^h)=span\left\{\varphi_1,\cdots,\varphi_{N^u}\right\} \subset H_0^1\left(\Omega\right)
$$
and
$$
L^h(\Omega^h)=span\left\{\phi_1,\cdots,\phi_{N^p}\right\} \subset L_0^2\left(\Omega\right).
$$
The approximated solution ${\bf u}^h$ and $p^h$ can be expressed in terms of these basis functions as
\begin{equation}\label{uh}
{\bf u}^h({\bf x})=\sum_{i=1}^{N^u}{\bf u}({\bf x}_i)\varphi_i({\bf x}), \quad
p^h({\bf x})=\sum_{i=1}^{N^p}p({\bf x}_i)\phi_i({\bf x}).
\end{equation}

We further discretize $\Omega_0^s$ as $\Omega_0^{sh}$ with the corresponding finite element spaces as:
$$
V^{sh}(\Omega_0^{sh})=span\left\{\varphi_1^s,\cdots,\varphi_{N^s}^s\right\} \subset H^1\left(\Omega_0^s\right),
$$
then move the vertices of each element of $\Omega_{n}^{sh}$ by their own velocities to get $\Omega_{n+1}^{sh}$,
and approximate $\left.{\bf u}^h({\bf x})\right|_{{\bf x}\in\Omega_{n+1}^{sh}}$ as:
\begin{equation}\label{ush}
{\bf u}^{sh}\left({\bf x}\right)
=\sum_{i=1}^{N^s}{\bf u}^h({\bf x}_i^s)\varphi_i^s({\bf x})
=\sum_{i=1}^{N^s}\sum_{j=1}^{N^u}{\bf u}({\bf x}_j)\varphi_j({\bf x}_i^s)\varphi_i^s({\bf x}),
\end{equation}
where ${\bf x}_i^s$ is the nodal coordinate of the solid mesh. Notice that the above approximation defines an $L^2$ projection $P_{n+1}$ from $V^h\left(\Omega^h\right)^d$ to $V^{sh}\left(\Omega_{n+1}^{sh}\right)^d$: $P_{n+1}\left({\bf u}^h({\bf x})\right)={\bf u}^{sh}\left({\bf x}\right)$,

We then discretize Problem \ref{problem_weak_after_time_discretization} in space as follows.
\begin{problem}\label{problem_weak_after_space_discretization}
	Given ${\bf u}_n^h$, $p_n^h$ and $\Omega_n^{sh}$, find ${\bf u}_{n+1}^h\in V^h(\Omega^h)^d$, $p_{n+1}^h \in L^h(\Omega^h)$ and $\Omega_{n+1}^{sh}$, such that for $\forall{\bf v}\in V^h(\Omega^h)^d$, $\forall q\in L^h(\Omega^h)$, the following four relations hold:
	\begin{equation}\label{weak_form1_discretization_space}
	\begin{split}
	&\rho^f\int_{\Omega^h}\frac{{\bf u}_{n+1}^h-{\bf u}_n^h}{\Delta t} \cdot{\bf v}d{\bf x}
	+\rho^f \int_{\Omega^h}\left({\bf u}_{n+1}^h\cdot\nabla\right){\bf u}_{n+1}^h\cdot{\bf v}d{\bf x}\\
	&+\frac{\mu^f}{2}\int_{\Omega^h}{\rm D}{\bf u}_{n+1}^h:{\rm D}{\bf v}d{\bf x}
	-\int_{\Omega^h}p_{n+1}^h\nabla \cdot {\bf v}d{\bf x}\\
	&+\rho^{\delta}\int_{\Omega_{\bf X}^{sh}}\frac{{\bf u}_{n+1}^{sh}-{\bf u}_n^{sh}}{\Delta t} \cdot{\bf v}^sd{\bf X}
	+c_1\Delta t\int_{\Omega_{\bf X}^{sh}}\nabla_{\bf X}{\bf u}_{n+1}^{sh}:\nabla_{\bf X}{\bf v}^sd{\bf X}\\
	&-c_1\int_{\Omega_{n+1}^{sh}}J_{n+1}^{-1}\nabla\cdot{\bf v}^sd{\bf x}
	=-c_1\int_{\Omega_{\bf X}^{sh}}{\bf F}_n^{sh}:\nabla_{\bf X}{\bf v}^sd{\bf X},
	\end{split}
	\end{equation}
	\begin{equation}\label{weak_form2_discretization_space}
	-\int_{\Omega} q\nabla \cdot {\bf u}_{n+1}^hd{\bf x}=0,
	\end{equation}	
\begin{equation}\label{updating_disretizatoin_space}
\Omega_{n+1}^{sh}=\left\{{\bf x}:{\bf x}={\bf x}_n+\Delta t{\bf u}_{n+1}^{sh}, {\bf x}_n\in\Omega_n^{sh} \right\},
\end{equation}
and
\begin{equation}\label{update_f}
{\bf F}_{n+1}^{sh}={\bf F}_n^{sh}+\Delta t\nabla_{\bf X}{\bf u}_{n+1}^{sh},
\end{equation}	
where ${\bf u}^{sh}=P_{n+1}\left({\bf u}^h\right)$ and ${\bf v}^s=P_{n+1}\left({\bf v}\right)$.
\end{problem}

\begin{remark}	
The proof of the energy estimate (Proposition \ref{lec_backward_Euler}) for the spatially continuous case can also be applied to the discrete case (see \ref{appendix_space_dis}).
\end{remark}	

\begin{remark}
There are two sources of nonlinearity in Problem \ref{problem_weak_after_space_discretization}: the convection term and the moving solid domain. We can accommodate these by moving the convection term to the right-hand side of the equation, and using a fixed-point iteration to construct $\Omega_{n+1}^s$ in order to solve the nonlinear system at each time step. For other methods to treat convection, readers may refer to \cite{pironneau1989finite, Zienkiewic2014}. We shall only use this fully implicit implementation to consider low Reynolds number ($Re\approx 50$) cases in this paper in order to test the energy stability.
\end{remark}

\begin{remark}
A two-step explicit splitting scheme (${\bf F}$-scheme) is discussed in \ref{appendix_explicit} with corresponding energy analysis. This scheme is similar to that in \cite{Wang_2017} (${\bm\sigma}$-scheme), which may be adapted to problems at large Reynolds number (see \cite{Wang_2017} for more examples).
\end{remark}

\section{Numerical experiments}
\label{sec:numerical_exs}
In this section, we focus on validation of the energy stability of the proposed numerical method in two and three dimensions. For more two-dimensional numerical examples and validation of the basic algorithm see \cite{Wang_2017}. We shall use linear triangles (2D) or linear tetrahedra (3D) to discretize the solid domain $\Omega_0^s$. In domain $\Omega$, the $P_2/(P_1+P_0)$ elements will be used, i.e., the standard $Taylor$-$Hood$ element $P_2P_1$ is enriched by a constant $P_0$ for approximation of the pressure. This element has the property of local mass conservation and the constant $P_0$ may better capture the element-based jump of pressure \cite{Arnold_2002, Boffi_2011}. We shall demonstrate the improvement of mass conservation and energy conservation by using the $P_2/(P_1+P_0)$ elements compared to the $P_2P_1$ elements. We shall also validate that the total energy is nonincreasing as stated in Proposition \ref{lec_backward_Euler} and Remark \ref{energy_noincreasing_remark}.

\subsection{Oscillating disc driven by an initial kinetic energy (activated disc)}
\label{subsec:oddbaike}
In this test, we consider an enclosed flow (${\bf n}\cdot{\bf u}=0$) in $\Omega=[0,1]\times[0,1]$ with a periodic boundary condition. A solid disc is initially located in the middle of the square $\Omega$ and has a radius of $0.2$. The initial velocity of the fluid and solid are prescribed by the following stream function
\begin{equation*}
\Psi=\Psi_0{\rm sin}(ax){\rm sin}(by),
\end{equation*}
where $\Psi_0=5.0\times10^{-2}$ and $a=b=2\pi$. In this test, $\rho^f=1$, $\mu^f=0.01$, $\rho^s=1.5$ and $c_1=1$. In order to visualize the flow a snapshot of the velocity and deformation fields is presented in Figure \ref{snapshot_of_fluid}, and the evolution of energy is presented in Figure \ref{energy_evolution} using a $50\times 50$ mesh (biquadratic squares for the fluid velocity and $3052$ bilinear triangles for the solid velocity).

\begin{figure}[h!]
	\begin{minipage}[t]{0.5\linewidth}
		\centering  
		\includegraphics[width=1.8in,angle=0]{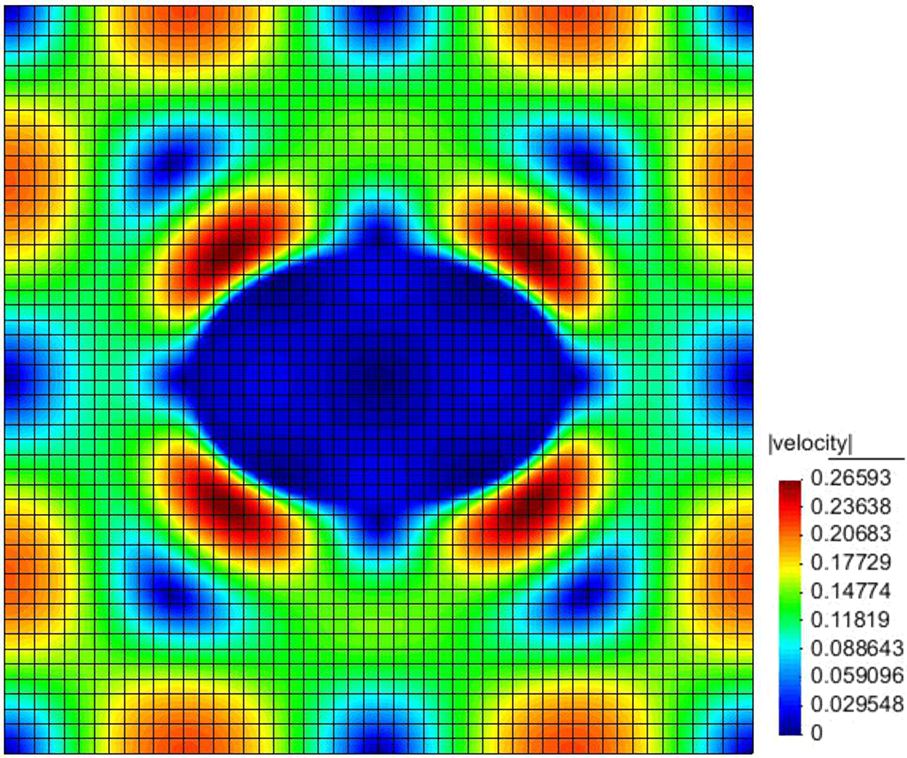}	
		\captionsetup{justification=centering}
		\caption*{\scriptsize(a) Velocity norm on the fluid mesh,}
	\end{minipage}
	\begin{minipage}[t]{0.5\linewidth}
		\centering  
		\includegraphics[width=2.0in,angle=0]{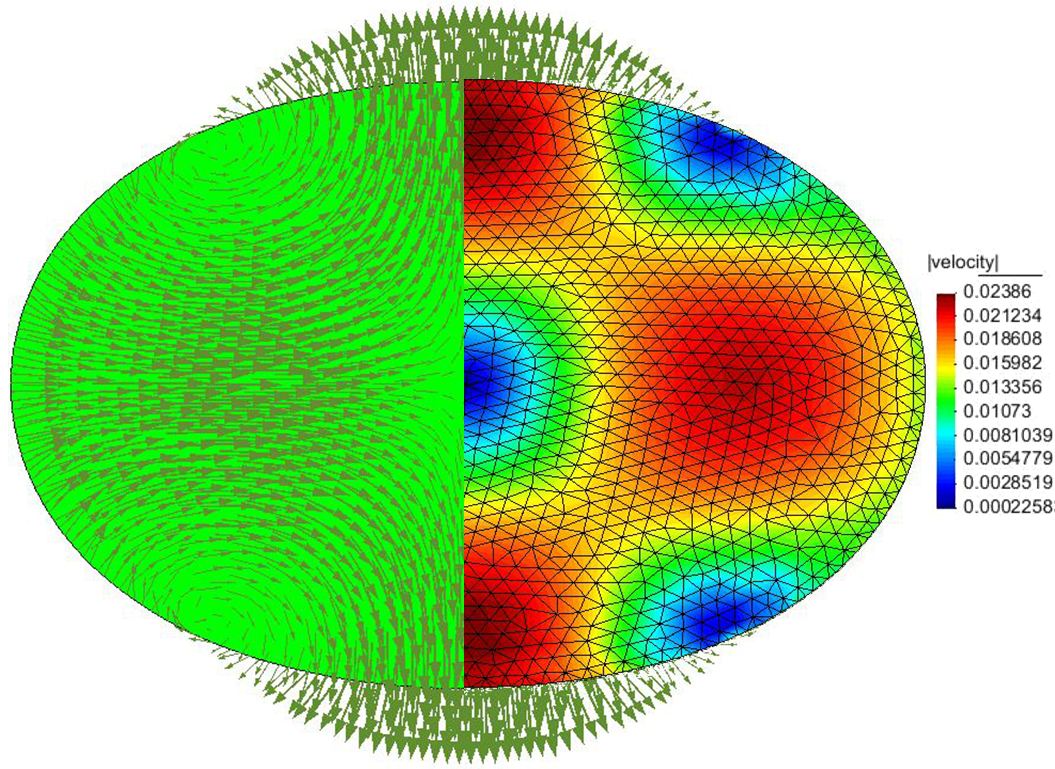}
		\caption*{\scriptsize(b) Distribution of velocity on the solid mesh.}
	\end{minipage}     		
	\captionsetup{justification=centering}
	\caption {\scriptsize Snapshot at $t=0.25$, $\Delta t=5.0\times10^{-3}$.} 
	\label{snapshot_of_fluid}
\end{figure}

\begin{figure}[h!]
\centering  
\includegraphics[width=2.0in,angle=0]{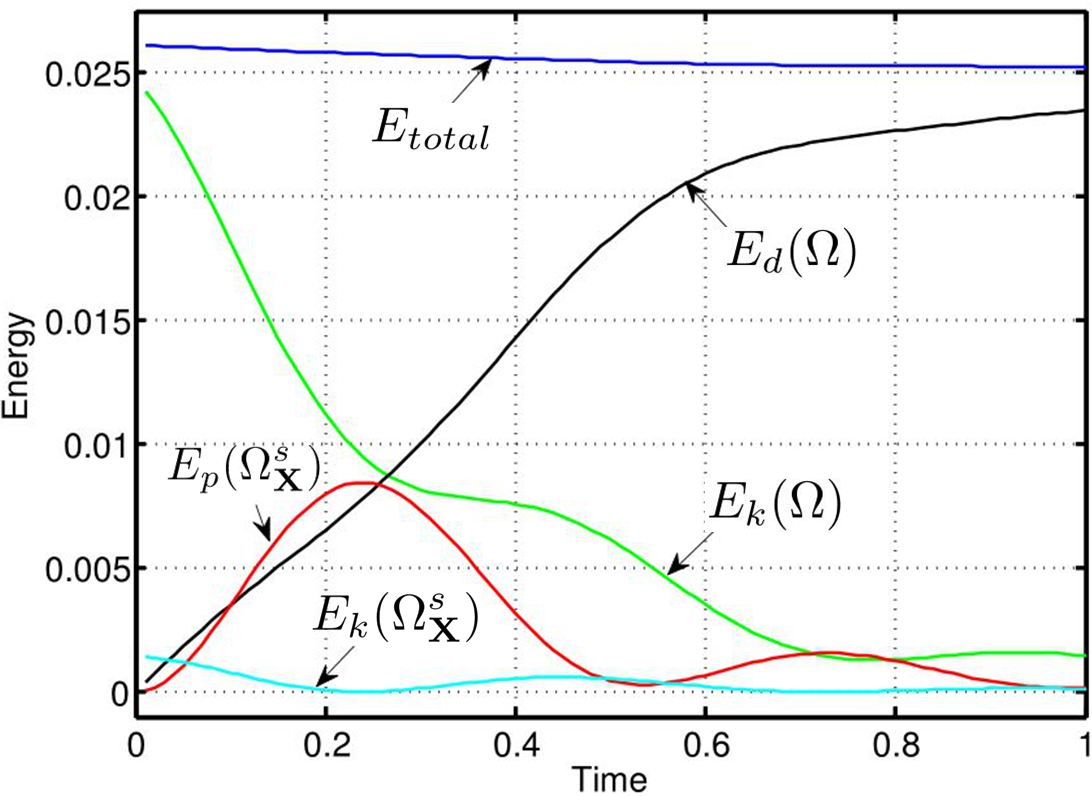}	    		
\captionsetup{justification=centering}
\caption {\scriptsize Evolution of energy, $\Delta t=5.0\times 10^{-3}$.} 
\label{energy_evolution}
\end{figure}

We commence by comparing $P_2/P_1$ elements and $P_2/(P_1+P_0)$ elements. The evolution of mass variation and energy ratio are demonstrated in Figure \ref{mass_energy_conservation}, from which it can be seen that the enrichment of the pressure field by a constant $P_0$ has an effect of stabilizing the mass and energy evolution. In addition, this enrichment of the pressure field dramatically improves the mass conservation, although the effect for energy conservation is not obvious. Then using element $P_2/(P_1+P_0)$, time convergence of the total energy can be observed from Figure \ref{time_mesh_conservation} (a), from which we can see a nonincreasing energy and a first order time convergence for both the implicit and explicit scheme (see \ref{appendix_explicit} for the energy estimate of the explicit scheme). It can be seen from Figure \ref{time_mesh_conservation} (b) that the residual term defined in (\ref{residual}) is very small and converges rapidly to zero when reducing $\Delta t$ ($\sim O(\Delta t^2)$).

\begin{figure}[h!]
	\begin{minipage}[t]{0.5\linewidth}
		\centering  
		\includegraphics[width=2.0in,angle=0]{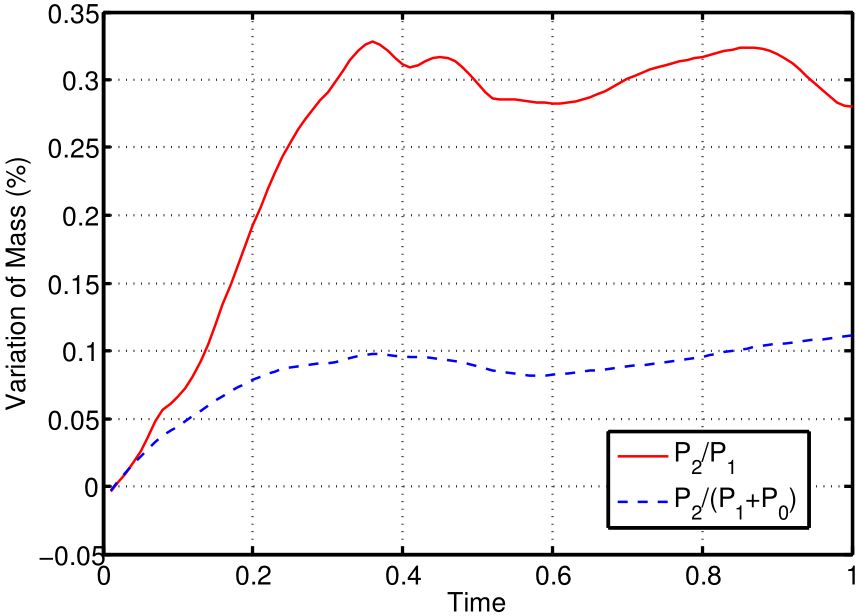}	
		\captionsetup{justification=centering}
		\caption*{\scriptsize(a) Variation of mass against time,}
	\end{minipage}
	\begin{minipage}[t]{0.5\linewidth}
		\centering  
		\includegraphics[width=1.95in,angle=0]{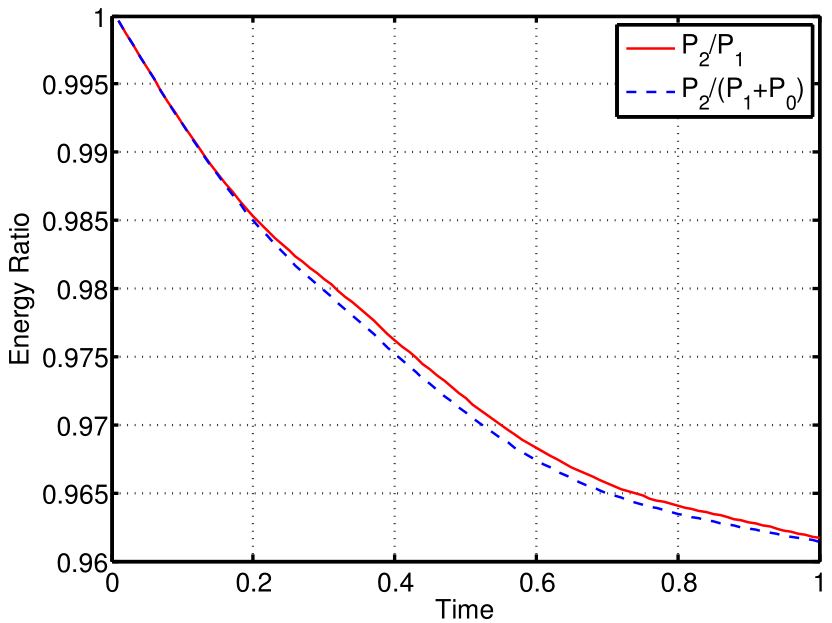}
		\caption*{\scriptsize(b) Energy ratio (see (\ref{energy estimate_after_time_discretization_closed})) against time.}
	\end{minipage}     		
	\captionsetup{justification=centering}
	\caption {\scriptsize Variation of mass and energy, $\Delta t=5.0\times 10^{-3}$.} 
	\label{mass_energy_conservation}
\end{figure}

\begin{figure}[h!]
	\begin{minipage}[t]{0.5\linewidth}
		\centering  
		\includegraphics[width=2.0in,angle=0]{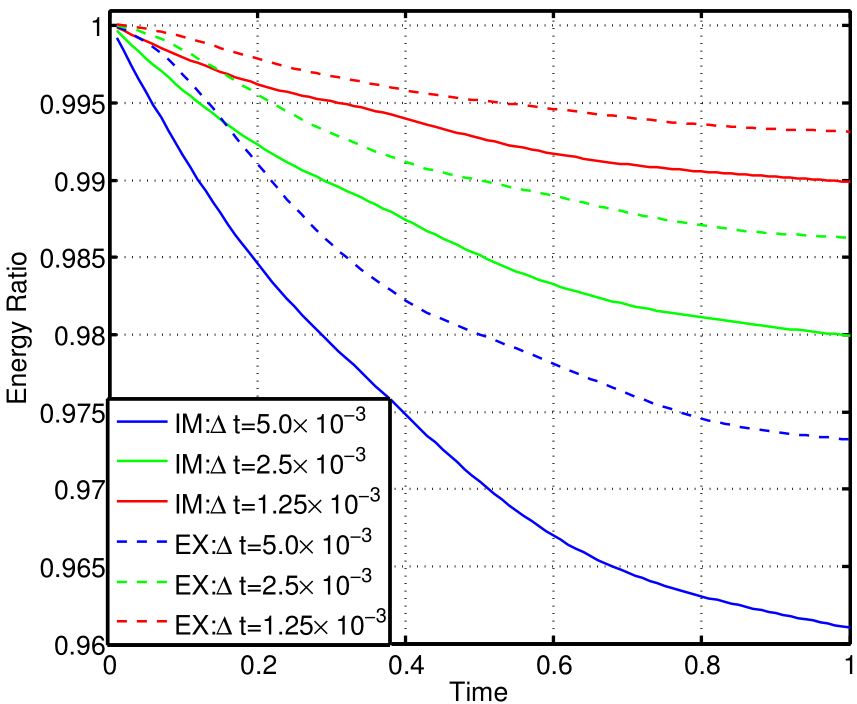}	
		\captionsetup{justification=centering}
		\caption*{\scriptsize(a) Energy ratio against time (defined in (\ref{energy estimate_after_time_discretization_closed})),}
	\end{minipage}
\begin{minipage}[t]{0.5\linewidth}	
	\centering  
	\includegraphics[width=1.9in,angle=0]{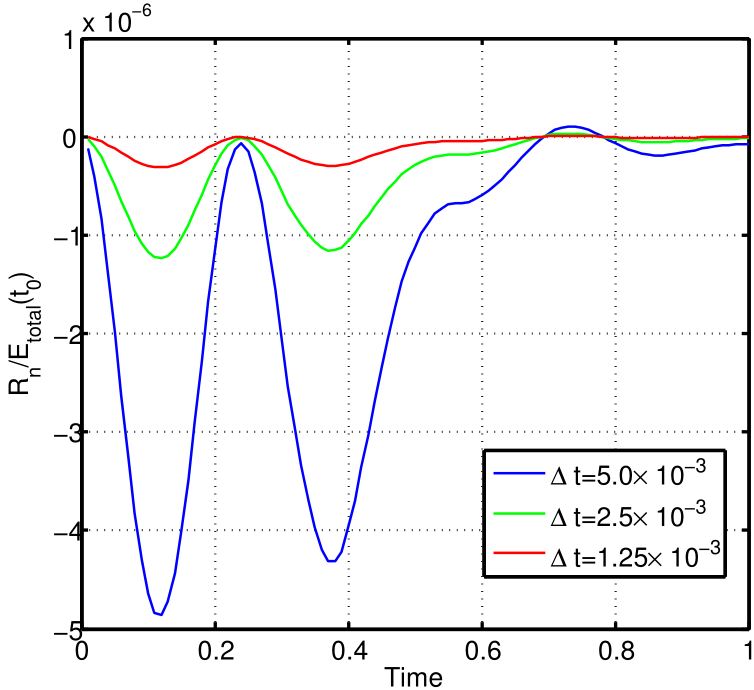}	    		
	\captionsetup{justification=centering}
	\caption*{\scriptsize (b) $R_n/E_{total}\left(t_0\right)$ against time (defined in (\ref{residual})).} 
\end{minipage}    		
	\captionsetup{justification=centering}
	\caption {\scriptsize Evolution of the energy ratio and residual $R_n$ for the test problem of activated disc.} 
	\label{time_mesh_conservation}
\end{figure}

\subsection{Oscillating disc driven by an initial potential energy (stretched disc)}
\label{subsec:odbype}
In the previous example, the disc oscillates because a kinetic energy is prescribed for the FSI system at the beginning. In this test, we shall stretch the disc and create a potential energy in the solid, then release it causing the disc to oscillate due to this potential energy. The computational domain is a square $\Omega=[0,1]\times[0,1]$. One quarter of a solid disc is located in the left-bottom corner of the square, and initially stretched as an ellipse as shown in Figure \ref{figex2}. Notice the equation of an ellipse $\frac{x^2}{a^2}+\frac{y^2}{b^2}=1$ and its area $\pi ab$, hence we ensure that this stretch does not change mass of the solid.
\begin{figure}[h!]
\centering
\includegraphics[width=2.2in,angle=0]{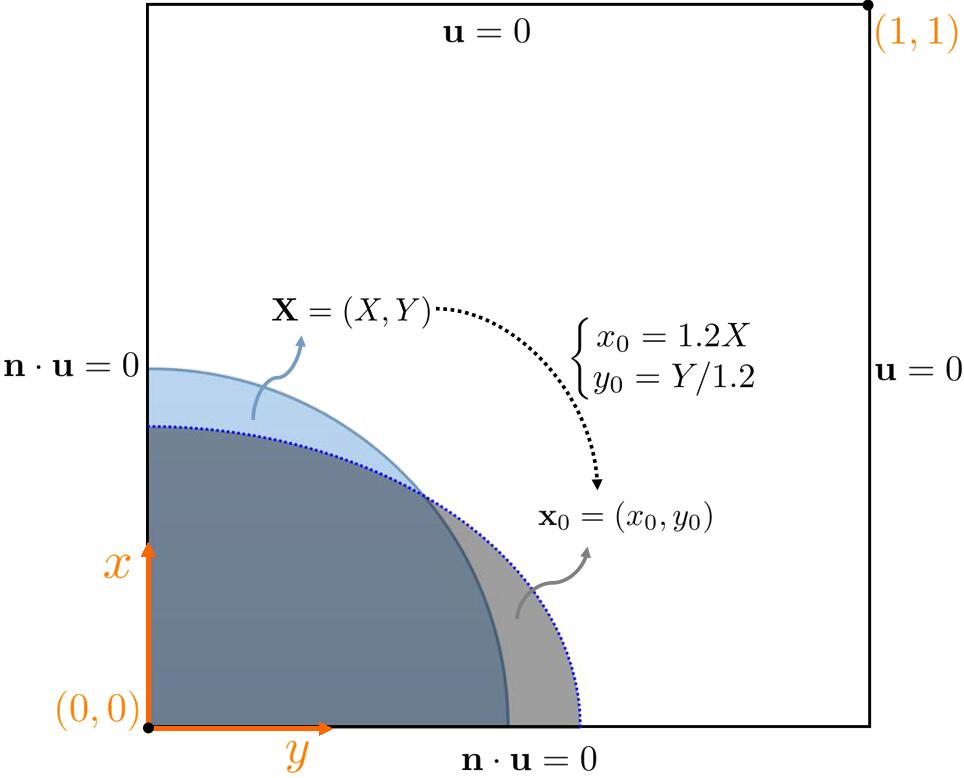}
\caption {\scriptsize Computational domain and boundary conditions for test problem \ref{subsec:odbype} (stretched disc).} 
\label{figex2}
\end{figure}

We choose $\rho^f=1$, $\mu^f=0.01$, $\rho^s=2$ and $c_1=2$. The fluid adopts a mesh of $66\times 66$ biquadratic squares, and the solid has similar node density ($8206$ linear triangles) as the fluid. A snapshot of pressure on the fluid mesh and corresponding solid deformation with its velocity norm are displayed in Figure \ref{Pressure and Velocity}, and the evolution of energy is presented in Figure \ref{energy_evolution_stretch}. The nonincreasing total energy can be observed from Figure \ref{time_mesh_conservation_stretch} (a) for both the implicit and explicit scheme (see \ref{appendix_explicit} for energy estimate of the explicit scheme). It can be seen from Figure \ref{time_mesh_conservation_stretch} (b) that the residual term defined in (\ref{residual}) is very small and converges rapidly to zero when reducing $\Delta t$.

\begin{figure}[h!] 
	\begin{minipage}[t]{0.5\linewidth}
		\centering  
		\includegraphics[width=2.0in,angle=0]{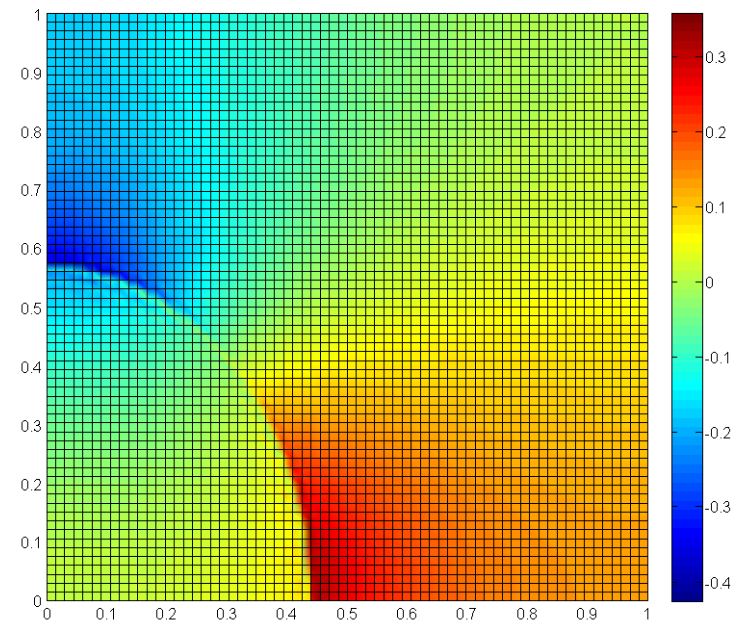}
		\caption*{\scriptsize(a) Distribution of pressures on the fluid mesh,}
	\end{minipage}
	\begin{minipage}[t]{0.5\linewidth}
		\centering  
		\includegraphics[width=1.5in,angle=0]{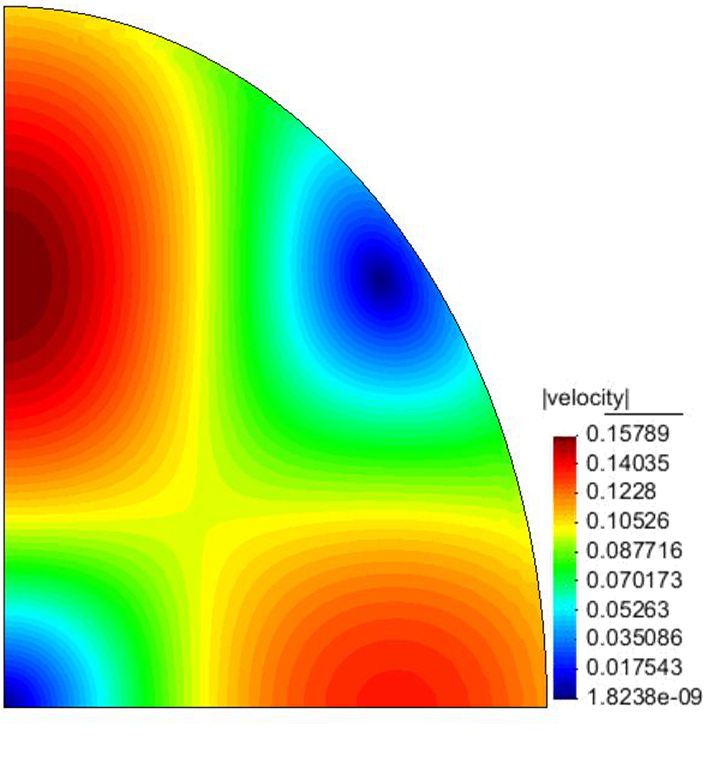}
		\caption*{\scriptsize(b) velocity norm on the solid.}
	\end{minipage}   		
	\captionsetup{justification=centering}
	\caption {\scriptsize A snapshot at $t=1$, $\Delta t=5.0\times 10^{-3}$.} 
	\label{Pressure and Velocity}
\end{figure}

\begin{figure}[h!]
	\centering  
	\includegraphics[width=2.0in,angle=0]{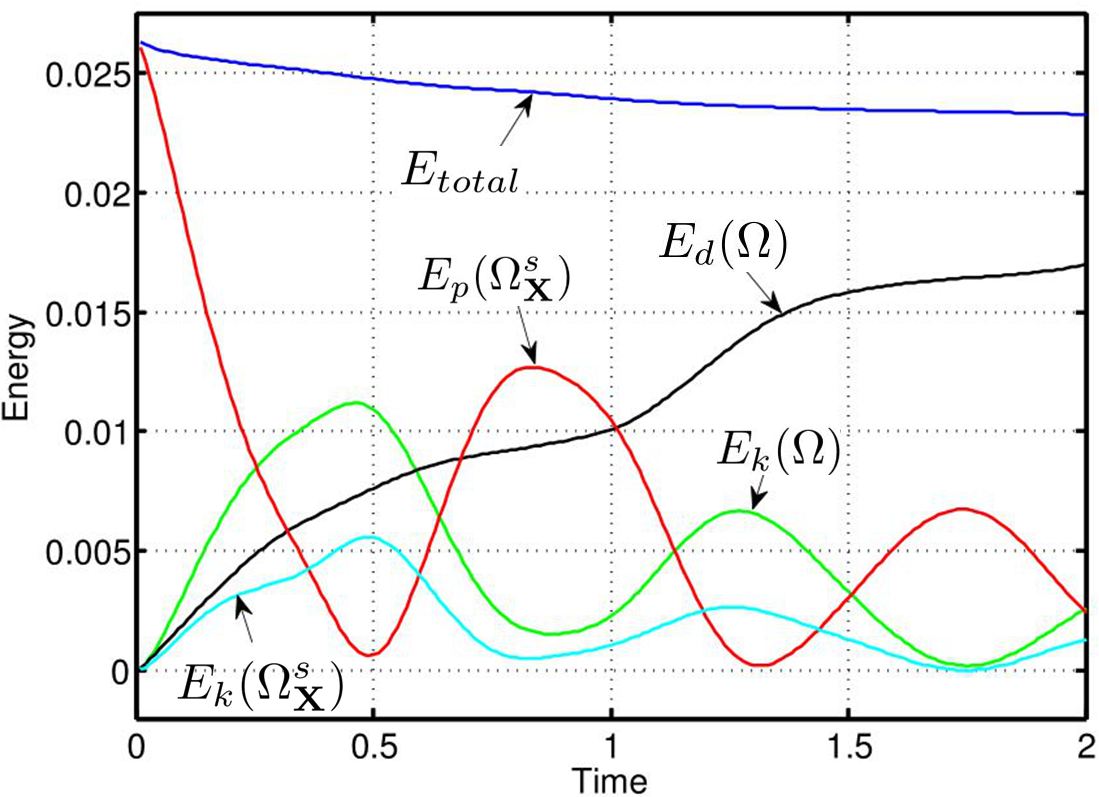}	    		
	\captionsetup{justification=centering}
	\caption {\scriptsize Evolution of energy, $\Delta t=5.0\times 10^{-3}$.} 
	\label{energy_evolution_stretch}
\end{figure}

\begin{figure}[h!]
	\begin{minipage}[t]{0.5\linewidth}
		\centering  
		\includegraphics[width=2.0in,angle=0]{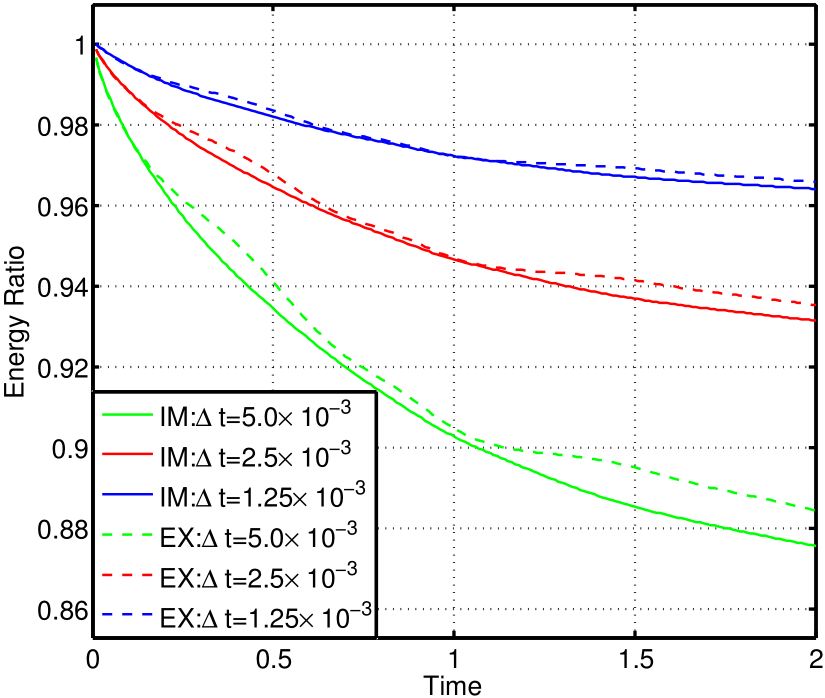}	
		\captionsetup{justification=centering}
		\caption*{\scriptsize(a) Energy ratio against time (defined in (\ref{energy estimate_after_time_discretization_closed})),}
	\end{minipage}
	\begin{minipage}[t]{0.5\linewidth}
		\centering  
		\includegraphics[width=1.9in,angle=0]{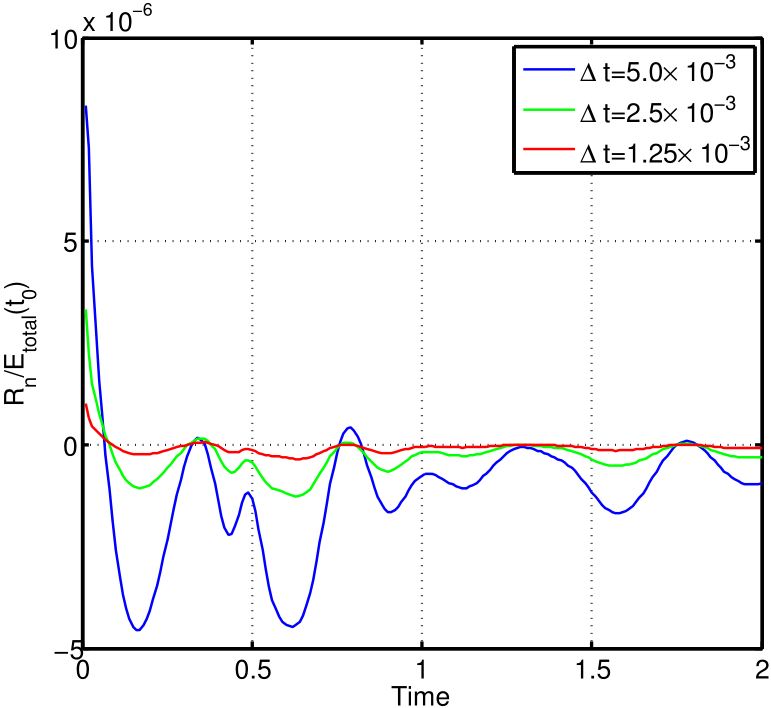}
	\caption*{\scriptsize (b) $R_n/E_{total}\left(t_0\right)$ against time (defined in (\ref{residual})).} 
	\end{minipage}  
	\captionsetup{justification=centering}
	\caption {\scriptsize Evolution of the energy ratio and residual $R_n$ for the test problem of stretched disc.} 
	\label{time_mesh_conservation_stretch}	   		
\end{figure}

\subsection{Oscillating ball driven by an initial kinetic energy}
\label{subsec:ball}
In this section, we consider a 3D oscillating ball, which is an extension of the example in section \ref{subsec:oddbaike}. The ball is initially located at the center of $\Omega=[0,1]\times[0,1]\times[0,0.6]$ with a radius of $0.2$. Using the property of symmetry this computation is carried out on $1/8$ of domain $\Omega$: $[0,0.5]\times[0,0.5]\times[0,0.3]$. The initial velocities of $x$ and $y$ components are the same as that used in section \ref{subsec:oddbaike} and the $z$ component is set to be 0 at the beginning. We adopt the same parameter and mesh size defined in section \ref{subsec:oddbaike} (with the same mesh size in the z direction). A snapshot of the $1/8$ solid ball and the corresponding fluid velocity norm are presented in Figure \ref{Velocity}, and the nonincreasing energy property is presented in Figure \ref{time_mesh_conservation_ball}.

\begin{figure}[h!]
	\begin{minipage}[t]{0.5\linewidth}
		\centering  
		\includegraphics[width=2.0in,angle=0]{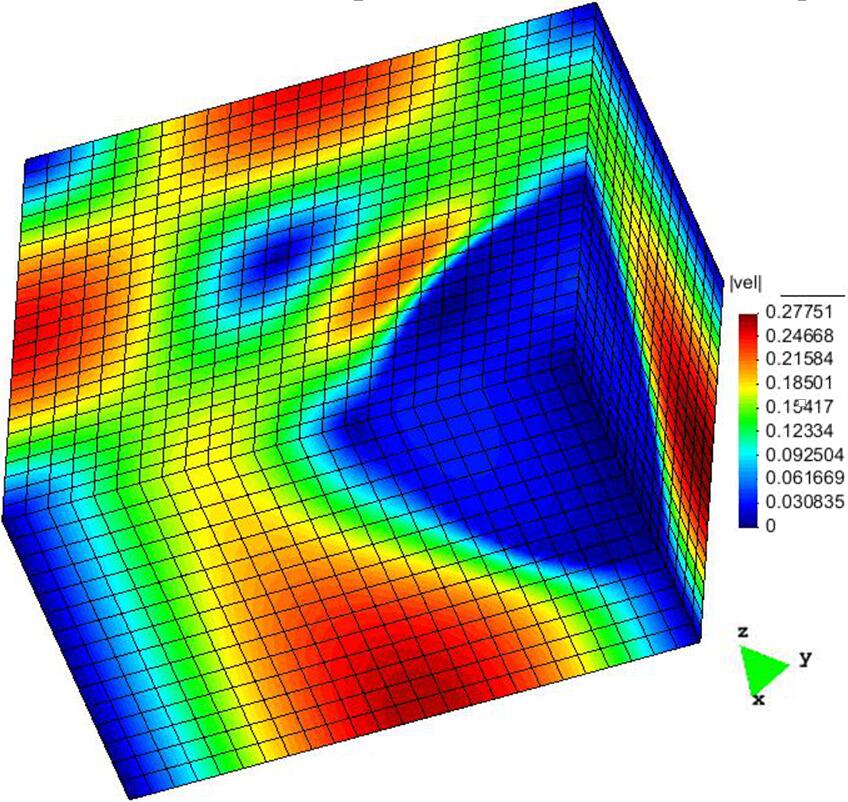}	
		\caption*{\scriptsize(a) Fluid mesh,}
	\end{minipage}
	\begin{minipage}[t]{0.5\linewidth}
		\centering  
		\includegraphics[width=1.5in,angle=0]{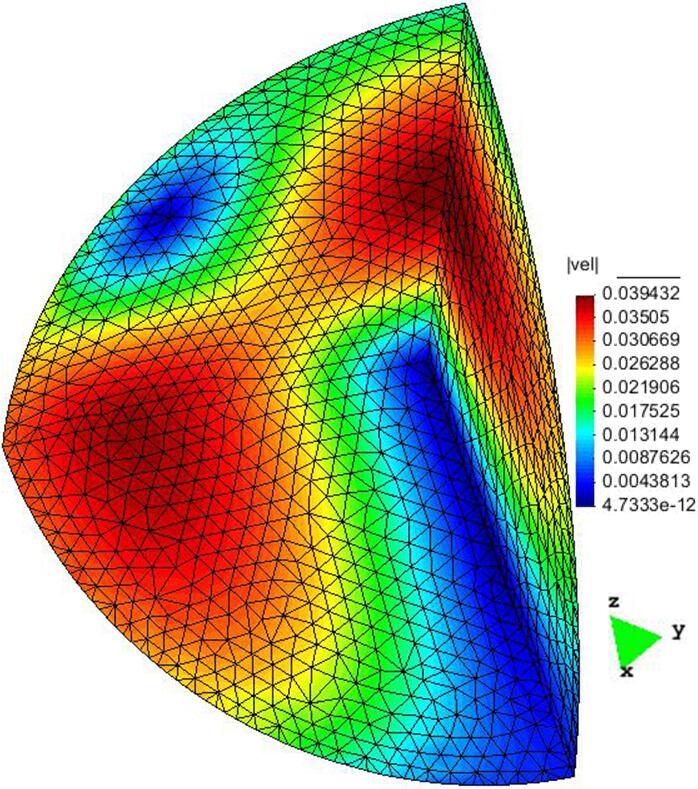}
		\caption*{\scriptsize(b) solid mesh.}
	\end{minipage}   		
	\captionsetup{justification=centering}
	\caption {\scriptsize Velocity norm at $t=0.2$.} 
	\label{Velocity}
\end{figure}

\begin{figure}[h!]
\centering  
\includegraphics[width=1.9in,angle=0]{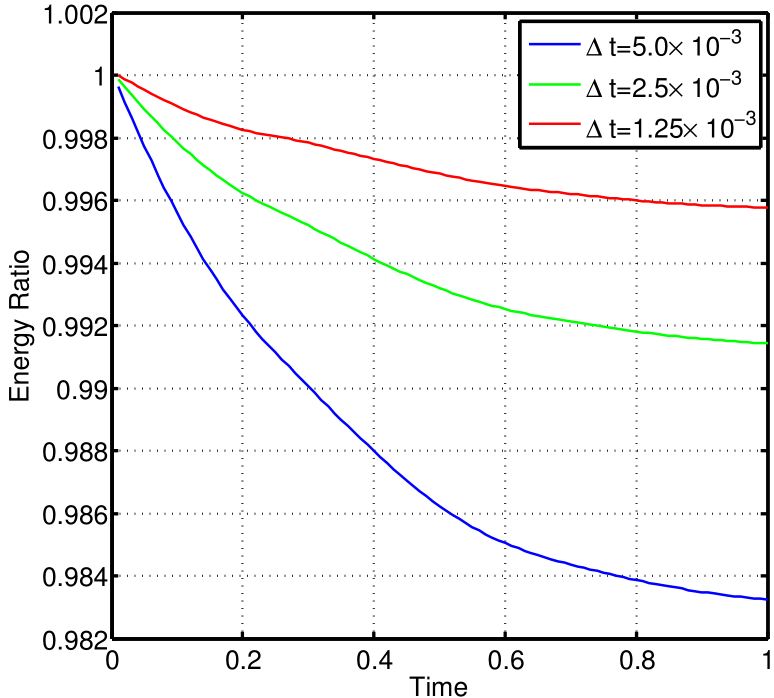}
\captionsetup{justification=centering}
\caption {\scriptsize Evolution of the energy ratio (defined in (\ref{energy estimate_after_time_discretization_closed})) for the test problem of oscillating ball.} 
\label{time_mesh_conservation_ball}	   		
\end{figure}

\section{Conclusions}
\label{sec:conclusions}
In this article, we first introduce an implicit version of \cite{Wang_2017} for the one-field fictitious domain method (one-field FDM) based upon updating the solid deformation tensor ${\bf F}$. Then the energy-preserving property for this one-field FDM is proved on the continuous level, and the energy-nonincreasing property is proved after discretization in time and space. The energy property for an explicit scheme is also analyzed in \ref{appendix_explicit}. Finally, a selection of numerical tests are presented to demonstrate this theoretical energy estimate in both two and three dimensions. It has therefore been demonstrated that the proposed one-field FDM is a stable, robust and computationally efficient technique for the solution of a wide range of fluid-structure interaction problems.

\appendix
\section{Stability analysis after space discretization}
As with the previous stability estimate (Proposition \ref{lec_backward_Euler}) after time discretization, we have the following estimate after space discretization. 
\label{appendix_space_dis}
\begin{proposition} \label{lec_backward_Euler_space}
	Let $\left({\bf u}_{n+1}^h, p_{n+1}^h\right)$ be the solution pair of Problem \ref{problem_weak_after_space_discretization}. If $\rho^\delta\ge 0$, then
	\begin{equation}\label{energy_estimate_after_time_discretization_backward_Euler_space}
	\begin{split}
	&\frac{\rho^f}{2}\int_{\Omega^h}\left|{\bf u}_{n+1}^h\right|^2d{\bf x}
	+\frac{\rho^\delta}{2}\int_{\Omega_{\bf X}^{sh}}\left|{\bf u}_{n+1}^{sh}\right|^2d{\bf X}
	+\int_{\Omega_{\bf X}^{sh}}\Psi\left({\bf F}_{n+1}^{sh}\right)d{\bf X}  \\
	&+\frac{\Delta t\mu^f}{2}\sum_{k=1}^{n+1}\int_{\Omega^h}{\rm D}{\bf u}_k^h:{\rm D}{\bf u}_k^hd{\bf x}\\
	&\le \frac{\rho^f}{2}\int_{\Omega^h}\left|{\bf u}_n^h\right|^2d{\bf x}
	+\frac{\rho^\delta}{2}\int_{\Omega_{\bf X}^{sh}}\left|{\bf u}_n^h\right|^2d{\bf X}
	+\int_{\Omega_{\bf X}^{sh}}\Psi\left({\bf F}_n^{sh}\right)d{\bf X}\\
	&+\frac{\Delta t\mu^f}{2}\sum_{k=1}^n\int_{\Omega^h}{\rm D}{\bf u}_k^h:{\rm D}{\bf u}_k^hd{\bf x}+R_{n+1}^h,
	\end{split}
	\end{equation}
where
\begin{equation}\label{residual_space}
R_{n+1}^h=\frac{c_1\Delta t^2}{2}\int_{\Omega_{\bf X}^{sh}}\left(\left|\left({\bf F}_{n+1}^{sh}\right)^{-1}\nabla_{\bf X}{\bf u}_{n+1}^{sh}\right|^2
-\left|\nabla_{\bf X}{\bf u}_{n+1}^{sh}\right|^2\right)d{\bf X}.
\end{equation}
\end{proposition}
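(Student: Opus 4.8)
The plan is to repeat, almost line for line, the argument used for Proposition~\ref{lec_backward_Euler}, with every field replaced by its finite element counterpart and with the solid quantities understood through the $L^2$ projection $P_{n+1}$. Concretely, I would take $\mathbf{v}=\mathbf{u}_{n+1}^h$ as the test function in (\ref{weak_form1_discretization_space}); by the definition of $P_{n+1}$ this forces $\mathbf{v}^s=P_{n+1}(\mathbf{u}_{n+1}^h)=\mathbf{u}_{n+1}^{sh}$, so the solid inertia, stiffness and $\mathbf{F}_n^{sh}$ terms are all evaluated at $\mathbf{u}_{n+1}^{sh}$. Multiplying through by $\Delta t$, choosing $q=p_{n+1}^h$ in (\ref{weak_form2_discretization_space}) and adding it to (\ref{weak_form1_discretization_space}) eliminates the pressure term exactly as before, yielding a discrete analogue of (\ref{one_equation_backward_Euler}) in which the solid stiffness contribution $c_1\Delta t\int_{\Omega_{\bf X}^{sh}}\nabla_{\bf X}\mathbf{u}_{n+1}^{sh}:\nabla_{\bf X}\mathbf{u}_{n+1}^{sh}$, combined with the right-hand side $-c_1\int_{\Omega_{\bf X}^{sh}}\mathbf{F}_n^{sh}:\nabla_{\bf X}\mathbf{u}_{n+1}^{sh}$ and the update rule (\ref{update_f}), collapses to $c_1\Delta t\int_{\Omega_{\bf X}^{sh}}\mathbf{F}_{n+1}^{sh}:\nabla_{\bf X}\mathbf{u}_{n+1}^{sh}$.

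Next I would invoke the two pointwise algebraic results behind Proposition~\ref{estimate_phi}, namely Lemma~\ref{energy_estimate} and the log-determinant concavity lemma preceding it. These are purely matrix identities/inequalities valid at every material point, so they apply verbatim with $\mathbf{F}_{n+1}$ replaced by $\mathbf{F}_{n+1}^{sh}$ and $\nabla_{\bf X}\mathbf{u}_{n+1}$ by $\nabla_{\bf X}\mathbf{u}_{n+1}^{sh}$; integrating over $\Omega_{\bf X}^{sh}$ reproduces the discrete version of Proposition~\ref{estimate_phi} with residual precisely $R_{n+1}^h$ as in (\ref{residual_space}). The two inertia cross terms $\rho^f\int_{\Omega^h}\mathbf{u}_n^h\cdot\mathbf{u}_{n+1}^h$ and $\rho^\delta\int_{\Omega_{\bf X}^{sh}}\mathbf{u}_n^{sh}\cdot\mathbf{u}_{n+1}^{sh}$ are controlled by Cauchy--Schwarz together with $ab\le(a^2+b^2)/2$, exactly as in the proof of Proposition~\ref{lec_backward_Euler}. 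Assembling these pieces, and recognising the telescoping viscous sum $\frac{\Delta t\mu^f}{2}(\sum_{k=1}^{n+1}-\sum_{k=1}^{n})$, gives (\ref{energy_estimate_after_time_discretization_backward_Euler_space}); as before no sign for $R_{n+1}^h$ is asserted.

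The one step that is not automatic is the vanishing of the discrete convection term, i.e. the analogue of Lemma~\ref{lemma_convection_zero_discretization_in_time}: one needs $\int_{\Omega^h}(\mathbf{u}_{n+1}^h\cdot\nabla)\mathbf{u}_{n+1}^h\cdot\mathbf{u}_{n+1}^h\,d\mathbf{x}=0$. The continuous-level proof used integration by parts plus $\int_\Omega|\mathbf{u}|^2\nabla\cdot\mathbf{u}=0$ from (\ref{weak_form2}), but $|\mathbf{u}_{n+1}^h|^2$ is a degree-four polynomial and does not lie in $L^h(\Omega^h)$, so (\ref{weak_form2_discretization_space}) cannot be tested against it directly. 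This is the main obstacle, and I would resolve it by observing that the scheme may be written equivalently with the skew-symmetric (Temam) form of the trilinear term, $\tfrac12\big[(\mathbf{u}\cdot\nabla)\mathbf{u}\cdot\mathbf{v}-(\mathbf{u}\cdot\nabla)\mathbf{v}\cdot\mathbf{u}\big]$, which vanishes identically when $\mathbf{v}=\mathbf{u}_{n+1}^h$ with no divergence hypothesis; alternatively one leans on the elementwise mass conservation $\int_K\nabla\cdot\mathbf{u}_{n+1}^h=0$ furnished by the $P_0$ enrichment in the $P_2/(P_1+P_0)$ pair. Everything else is a direct transcription of the argument for Proposition~\ref{lec_backward_Euler}, and I would present it in that compressed form, referring back to that proof for the shared details.
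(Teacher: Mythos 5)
Your main argument is exactly the paper's own proof: test (\ref{weak_form1_discretization_space}) with ${\bf v}={\bf u}_{n+1}^h$ (so ${\bf v}^s={\bf u}_{n+1}^{sh}$), multiply by $\Delta t$, take $q=p_{n+1}^h$ in (\ref{weak_form2_discretization_space}) to remove the pressure, use the update (\ref{update_f}) to collapse the stiffness and right-hand-side terms into $c_1\Delta t\int_{\Omega_{\bf X}^{sh}}{\bf F}_{n+1}^{sh}:\nabla_{\bf X}{\bf u}_{n+1}^{sh}$, apply the pointwise matrix lemmas (Lemma \ref{energy_estimate} and the log-determinant concavity lemma) to the discrete fields to recover the discrete version of Proposition \ref{estimate_phi} with residual $R_{n+1}^h$, and handle the inertia cross terms by Cauchy--Schwarz. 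The paper's appendix proof is precisely this, stated in compressed form with the remark that Lemmas \ref{energy_estimate}--\ref{lemma_convection_zero_discretization_in_time} ``still hold after space discretization.''

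Where you go beyond the paper is in refusing to accept that last remark at face value for the convection term, and you are right to hesitate: the continuous argument of Lemma \ref{lemma_convection_zero} needs $\int_{\Omega}|{\bf u}|^2\nabla\cdot{\bf u}=0$, and $|{\bf u}_{n+1}^h|^2$ is not in $L^h(\Omega^h)$, so (\ref{weak_form2_discretization_space}) does not deliver this; the paper glosses over the point. However, neither of your proposed repairs closes the gap for the scheme as written. The Temam skew-symmetric form is \emph{not} equivalent to the convective form used in (\ref{weak_form1_discretization_space}) for merely discretely divergence-free velocities -- the two trilinear forms differ by exactly the term $\frac12\int_{\Omega^h}|{\bf u}_{n+1}^h|^2\nabla\cdot{\bf u}_{n+1}^h\,d{\bf x}$ whose vanishing is in question -- so proving the estimate for the skew form proves it for a modified scheme, not for Problem \ref{problem_weak_after_space_discretization}. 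Likewise, the $P_0$ enrichment only gives $\int_K\nabla\cdot{\bf u}_{n+1}^h\,d{\bf x}=0$ on each element $K$, i.e.\ orthogonality of $\nabla\cdot{\bf u}_{n+1}^h$ to piecewise constants, which does not annihilate the piecewise-quartic weight $|{\bf u}_{n+1}^h|^2$. An honest statement of the result for the scheme as written either adds the uncontrolled convection remainder to the right-hand side alongside $R_{n+1}^h$, or modifies the discrete convection term (skew-symmetrization) and proves the estimate for that variant; your write-up should say explicitly which of these it is doing rather than presenting them as interchangeable fixes.
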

\begin{proof}
	Let ${\bf v}={\bf u}_{n+1}^h$ in (\ref{weak_form1_discretization_space}) and multiply $\Delta t$ on both side of the equation, and then let $q=p_{n+1}^h$ in (\ref{weak_form2_discretization_space}) and substitute into equation (\ref{weak_form1_discretization_space}), we get:	
	\begin{equation}\label{one_equation_backward_Euler_space}
	\begin{split}
	&\rho^f\int_{\Omega^h}\left({\bf u}_{n+1}^h-{\bf u}_n^h\right)\cdot{\bf u}_{n+1}^hd{\bf x}
	+\frac{\Delta t\mu^f}{2}\int_{\Omega^h}{\rm D}{\bf u}_{n+1}^h:{\rm D}{\bf u}_{n+1}^hd{\bf x}\\
	&+\rho^{\delta}\int_{\Omega_{\bf X}^{sh}}\left({\bf u}_{n+1}^{sh}-{\bf u}_n^{sh}\right)\cdot{\bf u}_{n+1}^{sh}d{\bf X}\\
	&+c_1\Delta t\int_{\Omega_{\bf X}^{sh}}{\bf F}_{n+1}^{sh}:\nabla_{\bf X}{\bf u}_{n+1}^{sh}d{\bf X}
	-c_1\Delta t\int_{\Omega_{n+1}^{sh}}\nabla\cdot{\bf u}_{n+1}^{sh}d{\bf x}
	=0.
	\end{split}
	\end{equation}
	Using the Cauchy-Schwarz inequality and the fact $ab\le\frac{a^2+b^2}{2}$, we have:
	\begin{equation*}
	\int_{\omega}{\bf u}_n\cdot{\bf u}_{n+1}d{\bf x}
	\le \left\|{\bf u}_n\right\|_{0,\omega} \left\|{\bf u}_{n+1}\right\|_{0,\omega}
	\le \frac{\|{\bf u}_n\|_{0,\omega}^2+\|{\bf u}_{n+1}\|_{0,\omega}^2}{2},
	\end{equation*}
	where $\omega=\Omega^h$ or $\Omega_{n+1}^{sh}$. Notice that Lemma \ref{energy_estimate} to \ref{lemma_convection_zero_discretization_in_time} still hold after space discretization, then substituting the above relation into (\ref{one_equation_backward_Euler_space}) gives (\ref{energy_estimate_after_time_discretization_backward_Euler_space}).
\end{proof}

\section{Energy estimate for a two-step explicit splitting scheme} 
\label{appendix_explicit}
In this section, we analyze the energy property for the 2-step explicit splitting scheme introduced in \cite{Wang_2017}, which can be stated as follows (corresponding to the implicit Problem \ref{problem_weak_after_space_discretization}):

\begin{problem}\label{problem_weak_after_space_discretization_explicit}	
Given ${\bf u}_n^h$, $p_n^h$ and $\Omega_n^{sh}$, find ${\bf u}_{n+1}^h\in V^h(\Omega^h)^d$, $p_{n+1}^h \in L^h(\Omega^h)$ and $\Omega_{n+1}^{sh}$, such that for $\forall{\bf v}\in V^h(\Omega^h)^d$, $\forall q\in L^h(\Omega^h)$, the following 5 relations hold:

(1) convetion step:	
	\begin{equation}\label{convection}
	\rho^f\int_{\Omega^h}\frac{{\bf u}_{n+1/2}^h-{\bf u}_n^h}{\Delta t} \cdot{\bf v}d{\bf x}
	+\rho^f \int_{\Omega^h}\left({\bf u}_{n+1/2}^h\cdot\nabla\right){\bf u}_{n+1/2}^h\cdot{\bf v}d{\bf x}=0,
	\end{equation}
	
(2) diffusion step:
	\begin{equation}\label{diffusion}
	\begin{split}
	&\rho^f\int_{\Omega^h}\frac{{\bf u}_{n+1}^h-{\bf u}_{n+1/2}^h}{\Delta t} \cdot{\bf v}d{\bf x}
	+\frac{\mu^f}{2}\int_{\Omega^h}{\rm D}{\bf u}_{n+1}^h:{\rm D}{\bf v}d{\bf x}\\
	&-\int_{\Omega^h}p_{n+1}^h\nabla \cdot {\bf v}d{\bf x}
	+\rho^{\delta}\int_{\Omega_{\bf X}^{sh}}\frac{{\bf u}_{n+1}^{sh}-{\bf u}_n^{sh}}{\Delta t} \cdot{\bf v}^sd{\bf X}\\
	&+c_1\Delta t\int_{\Omega_{\bf X}^{sh}}\nabla_{\bf X}{\bf u}_{n+1}^{sh}:\nabla_{\bf X}{\bf v}^sd{\bf X}
	-c_1\int_{\Omega_n^{sh}}J_n^{-1}\nabla_n\cdot{\bf v}^sd{\bf x}\\
	&=-c_1\int_{\Omega_{\bf X}^{sh}}{\bf F}_n^{sh}:\nabla_{\bf X}{\bf v}^sd{\bf X},
	\end{split}
	\end{equation}
	\begin{equation}\label{weak_form2_discretization_space_ex}
	-\int_{\Omega} q\nabla \cdot {\bf u}_{n+1}^hd{\bf x}=0,
	\end{equation}	
	\begin{equation}\label{updating_disretizatoin_space_ex}
	\Omega_{n+1}^{sh}=\left\{{\bf x}:{\bf x}={\bf x}_n+\Delta t{\bf u}_{n+1}^{sh}, {\bf x}_n\in\Omega_n^{sh} \right\},
	\end{equation}
	and
	\begin{equation}\label{update_f_ex}
	{\bf F}_{n+1}^{sh}={\bf F}_n^{sh}+\Delta t\nabla_{\bf X}{\bf u}_{n+1}^{sh},
	\end{equation}	
	where ${\bf u}^{sh}=P_{n+1}\left({\bf u}^h\right)$, ${\bf v}^s=P_{n+1}\left({\bf v}\right)$, and $\nabla_n(\cdot)=\frac{\partial(\cdot)}{\partial{\bf x}_n}$.
\end{problem}
As with the previous analysis for the implicit scheme, if we let ${\bf v}={\bf u}_{n+1}^h$ in equations (\ref{convection}), (\ref{diffusion}) and (\ref{weak_form2_discretization_space_ex}), adding up these three equations, using (\ref{update_f_ex}) and ${\bf u}^{sh}=P_{n+1}\left({\bf u}^h\right)$, gives the energy estimate as follows.
\begin{proposition}
	Let $\left({\bf u}_{n+1}, p_{n+1}\right)$ be the solution pair of Problem \ref{problem_weak_after_space_discretization_explicit}. If $\rho^\delta\ge 0$, then
	\begin{equation}
	\begin{split}
	&\frac{\rho^f}{2}\int_{\Omega^h}\left|{\bf u}_{n+1}^h\right|^2d{\bf x}
	+\frac{\rho^\delta}{2}\int_{\Omega_{\bf X}^{sh}}\left|{\bf u}_{n+1}^h\right|^2d{\bf X}
	+\int_{\Omega_{\bf X}^{sh}}\Psi\left({\bf F}_{n+1}^{sh}\right)d{\bf X}  \\
	&+\frac{\Delta t\mu^f}{2}\sum_{k=1}^{n+1}\int_{\Omega^h}{\rm D}{\bf u}_k^h:{\rm D}{\bf u}_k^hd{\bf x}\\
	&\le \frac{\rho^f}{2}\int_{\Omega^h}\left|{\bf u}_n^h\right|^2d{\bf x}
	+\frac{\rho^\delta}{2}\int_{\Omega_{\bf X}^{sh}}\left|{\bf u}_n^h\right|^2d{\bf X}
	+\int_{\Omega_{\bf X}^{sh}}\Psi\left({\bf F}_n^{sh}\right)d{\bf X}\\
	&+\frac{\Delta t\mu^f}{2}\sum_{k=1}^n\int_{\Omega^h}{\rm D}{\bf u}_k^h:{\rm D}{\bf u}_k^hd{\bf x}+R_{n+1}^{im}+R_{n+1}^{ex}+R_{n+1}^{split},
	\end{split}
	\end{equation}
where 
$
R_{n+1}^{im}=R_{n+1}^h
$
as defined in (\ref{residual_space}).
\begin{equation}
R_{n+1}^{ex}=c_1\Delta t\int_{\Omega_{\bf X}^{sh}}\left(\nabla_n\cdot{\bf u}_{n+1}^{sh}-\nabla\cdot{\bf u}_{n+1}^{sh}\right)d{\bf X},
\end{equation}
and
\begin{equation}
R_{n+1}^{split}=-\Delta t\rho^f \int_{\Omega^h}\left({\bf u}_{n+1/2}^h\cdot\nabla\right){\bf u}_{n+1/2}^h\cdot{\bf u}_{n+1}^hd{\bf x}.
\end{equation}
\end{proposition}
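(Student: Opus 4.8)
The plan is to mirror the proof of Proposition \ref{lec_backward_Euler_space}, keeping track of the extra error committed by the splitting and by the explicit evaluation of the solid pressure-like term. First I would take ${\bf v}={\bf u}_{n+1}^h$ in the convection step (\ref{convection}), in the diffusion step (\ref{diffusion}), and $q=p_{n+1}^h$ in the incompressibility relation (\ref{weak_form2_discretization_space_ex}); multiply each of the three identities by $\Delta t$; and add them. Since ${\bf v}^s=P_{n+1}({\bf v})={\bf u}_{n+1}^{sh}$, the half-step field telescopes out of the inertial terms, $\rho^f\int_{\Omega^h}({\bf u}_{n+1/2}^h-{\bf u}_n^h)\cdot{\bf u}_{n+1}^h + \rho^f\int_{\Omega^h}({\bf u}_{n+1}^h-{\bf u}_{n+1/2}^h)\cdot{\bf u}_{n+1}^h = \rho^f\int_{\Omega^h}({\bf u}_{n+1}^h-{\bf u}_n^h)\cdot{\bf u}_{n+1}^h$, and the pressure term is annihilated by the incompressibility relation, exactly as in (\ref{one_equation_backward_Euler_space}). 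The viscous term $\frac{\Delta t\mu^f}{2}\int_{\Omega^h}{\rm D}{\bf u}_{n+1}^h:{\rm D}{\bf u}_{n+1}^h$ stays on the left.

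The first new ingredient is the convection term. Because the convection step is evaluated at the intermediate velocity ${\bf u}_{n+1/2}^h$, Lemma \ref{lemma_convection_zero_discretization_in_time} no longer forces it to vanish; instead I would simply transfer $\Delta t\rho^f\int_{\Omega^h}({\bf u}_{n+1/2}^h\cdot\nabla){\bf u}_{n+1/2}^h\cdot{\bf u}_{n+1}^h\,d{\bf x}$ to the right-hand side, where by definition it is exactly $R_{n+1}^{split}$.

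Next I would handle the solid-stress terms. Testing the stiffness term gives $c_1\Delta t^2\int_{\Omega_{\bf X}^{sh}}\nabla_{\bf X}{\bf u}_{n+1}^{sh}:\nabla_{\bf X}{\bf u}_{n+1}^{sh}$, which combines with the right-hand side $c_1\Delta t\int_{\Omega_{\bf X}^{sh}}{\bf F}_n^{sh}:\nabla_{\bf X}{\bf u}_{n+1}^{sh}$ of (\ref{diffusion}), via the update (\ref{update_f_ex}), into $c_1\Delta t\int_{\Omega_{\bf X}^{sh}}{\bf F}_{n+1}^{sh}:\nabla_{\bf X}{\bf u}_{n+1}^{sh}$. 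For the explicitly evaluated term $-c_1\Delta t\int_{\Omega_n^{sh}}J_n^{-1}\nabla_n\cdot{\bf u}_{n+1}^{sh}\,d{\bf x}$ I would change variables to the reference configuration, where $J_n^{-1}\,d{\bf x}=d{\bf X}$ and likewise $J_{n+1}^{-1}\,d{\bf x}=d{\bf X}$, obtaining $-c_1\Delta t\int_{\Omega_n^{sh}}J_n^{-1}\nabla_n\cdot{\bf u}_{n+1}^{sh}\,d{\bf x} = -c_1\Delta t\int_{\Omega_{n+1}^{sh}}J_{n+1}^{-1}\nabla\cdot{\bf u}_{n+1}^{sh}\,d{\bf x} - R_{n+1}^{ex}$, with $R_{n+1}^{ex}$ precisely the reference-domain integral of $\nabla_n\cdot{\bf u}_{n+1}^{sh}-\nabla\cdot{\bf u}_{n+1}^{sh}$. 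Since (\ref{update_f_ex}) coincides with (\ref{update_f}), the proof of Proposition \ref{estimate_phi} applies verbatim with the $sh$-quantities, giving $c_1\Delta t\int_{\Omega_{\bf X}^{sh}}{\bf F}_{n+1}^{sh}:\nabla_{\bf X}{\bf u}_{n+1}^{sh} - c_1\Delta t\int_{\Omega_{n+1}^{sh}}J_{n+1}^{-1}\nabla\cdot{\bf u}_{n+1}^{sh} \ge \int_{\Omega_{\bf X}^{sh}}\Psi({\bf F}_{n+1}^{sh})\,d{\bf X} - \int_{\Omega_{\bf X}^{sh}}\Psi({\bf F}_n^{sh})\,d{\bf X} - R_{n+1}^{im}$, where $R_{n+1}^{im}=R_{n+1}^h$ from (\ref{residual_space}).

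Finally, I would bound the inertial terms below by Cauchy--Schwarz and $ab\le(a^2+b^2)/2$: $\int_\omega({\bf u}_{n+1}-{\bf u}_n)\cdot{\bf u}_{n+1} \ge \tfrac12\|{\bf u}_{n+1}\|_{0,\omega}^2-\tfrac12\|{\bf u}_n\|_{0,\omega}^2$ for $\omega=\Omega^h$ and $\omega=\Omega_{n+1}^{sh}$, the latter scaled by $\rho^\delta\ge0$ (this is where that hypothesis is used). Collecting the pieces, moving the three residuals to the right, and adding $\frac{\Delta t\mu^f}{2}\sum_{k=1}^n\int_{\Omega^h}{\rm D}{\bf u}_k^h:{\rm D}{\bf u}_k^h$ to both sides yields the claimed estimate. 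I expect the only real obstacle to be the careful bookkeeping of the three residual contributions --- in particular the change-of-variables identity that isolates $R_{n+1}^{ex}$ and the remark that Proposition \ref{estimate_phi} transfers unchanged --- the remaining manipulations being identical to those in Proposition \ref{lec_backward_Euler_space}.
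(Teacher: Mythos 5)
Your proposal is correct and follows essentially the same route as the paper, whose proof is only the one-sentence instruction to test (\ref{convection}), (\ref{diffusion}) and (\ref{weak_form2_discretization_space_ex}) with ${\bf u}_{n+1}^h$ (and $p_{n+1}^h$), add, and use (\ref{update_f_ex}) with the projection; your plan simply fills in the bookkeeping (telescoping of ${\bf u}_{n+1/2}^h$, the pull-back identifying $R_{n+1}^{ex}$, Proposition \ref{estimate_phi} giving $R_{n+1}^{im}$, and the Cauchy--Schwarz step) with the correct signs throughout.
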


\end{document}